\documentclass[twocolumn]{autart}
\usepackage{graphicx}
\usepackage{mathrsfs}
\usepackage{mathptmx}
\usepackage{amsmath} 

\usepackage{amsthm} 
\usepackage[authoryear]{natbib}
\setcitestyle{authoryear,open={},close={},comma}
\usepackage{appendix}
\usepackage{amsfonts} 
\usepackage{amscd}
\usepackage{amssymb} 
\usepackage{array}
\usepackage{subfigure}
\usepackage{textcomp}
\usepackage{bm}
\usepackage{xcolor}
\usepackage{indentfirst}
\usepackage[OT2,T1]{fontenc}

\usepackage{algorithm}
\usepackage{algorithmic}

\linespread{1}

\setlength{\parskip}{0.5em}
\setlength{\parindent}{1em}

\newtheorem*{problem}{Problem Statement}
\newtheorem{example}{Example}
\newtheorem{defi}{Definition}
\newtheorem{theorem}{Theorem}

\newtheorem{rema}{Remark}

\newtheorem{lemma}{Lemma}

\newtheorem*{testprin}{I/O Test Principles}

\date{}

\begin{document}
\begin{frontmatter}

\title{Data-based approaches to learning and control by similarity between heterogeneous systems} 

\thanks[footnoteinfo]{The material in this paper was not	presented at any conference.}

\vspace{-0.8cm}

\author[Affi1,Affi3]{Chenchao Wang}\ead{chenchaow1999@163.com},    
\author[Affi1,Affi2,Affi3]{Deyuan Meng}\ead{dymeng@buaa.edu.cn}         

\address[Affi1]{School of Automation Science and Electrical Engineering, Beihang University (BUAA), Beijing 100191, PR China}             
\address[Affi2]{State Key Laboratory of CNS/ATM, Beijing 100191, PR China}
\address[Affi3]{The Seventh Research Division, Beihang University (BUAA), Beijing 100191, PR China}

\begin{keyword}                           
Similarity; similarity indexes; admissible behavior; sampled data; similarity-based learning.          
\end{keyword}                             

\begin{abstract}
This paper proposes basic definitions of similarity and similarity indexes between admissible behaviors of heterogeneous host and guest systems and further presents a similarity-based learning control framework by exploiting the offline sampled data. By exploring helpful geometric properties of the admissible behavior and decomposing it into the subspace and offset components, the similarity indexes between two admissible behaviors are defined as the principal angles between their corresponding subspace components. By reconstructing the admissible behaviors leveraging sampled data, an efficient strategy for calculating the similarity indexes is developed, based on which a similarity-based learning control framework is proposed. It is shown that, with the application of similarity-based learning control, the host system can directly accomplish the same control tasks by utilizing the successful experience provided by the guest system, without having to undergo the trial-and-error process. All results in this paper are supported by simulation examples.
\end{abstract}
\end{frontmatter}

\section{Introduction}
\label{sec:introduction}

Learning-based control, as one of the most promising fields within control community, has attracted significant attention and popularity. Learning-based control takes direct inspirations from human's learning process (\citep{ZJiang2020FTSC}). When individuals endeavor to acquire new skills, they repetitively engage in specific tasks and gather experience from past failures, ensuring their ability to better accomplish the same tasks in the future.
In a similar manner, dynamical systems can also recursively benefit from the past and correct the control errors, with the guarantee of the enhanced control performances (\citep{DABristow2006IEEECSM}). Such learning-based control mechanisms that learn from one's own past experience have been extensively investigated, and some well-established control frameworks have been presented (see, e.g., \citep{SArimoto1984JRS,YLi2017arXiv,ILandau2011book}). All of these aforementioned control frameworks, which either design controllers or adjust adaptive parameters based on past experience to rectify control errors, have found widespread and successful applications in real-world industrial systems (see, e.g., \citep{MBertolini2021ESA}).

Another characteristic of the human learning process entails its inherent strong interactivity, based on which a novice can efficiently acquire new skills through learning from the advanced experience of some skilled experts. Just as human inevitably require to learn from others to achieve predetermined goals, the collaborative learning of multiple dynamical systems to achieve some unified objective is an essential and extensively discussed topic (see, e.g., \citep{JPoveda2019IJACSP,SHe2018TII,HWang2016TIE}). A simple example may be the leader-follower formation problem in multi-agent systems (\citep{ADorri2018IEEEACCESS}), in which the followers obtain information and experience from the leader to realize the tracking of the predetermined trajectory. To ensure the achievement of the coorperative objective among the multiple systems, the idea of leveraging experience provided by other systems is extensively adopted. Nevertheless, the existing learning-based control strategies somewhat exhibit weaknesses in the following two aspects:
\begin{enumerate}
	\item[W1)] Existing learning-based control strategies simply collect information from neighbors based on specific communication topology, failing to quantitively assess that which system's information is more beneficial;
	\item[W2)] Existing learning-based control strategies directly employ the (weighted) relative information among systems to design control strategies, without fully exploiting the potential of the experience from other systems.
\end{enumerate}
Therefore, in the scenarios where the host system is equipped with multiple external experience provided by guest systems, it is meaningful and urgent to develop a novel learning-based control framework, i.e., similarity-based learning control, to guide the learning and control for the host system. The proposed similarity-based learning control strategy is required to not only quantitively characterize the value of the guest system' experience but also develop guidelines for its efficient utilization.

Owing to its learning mechanism, the learning-based control typically yields more accurate control performances and lower model dependency. Simultaneously, with the advancements in computer science and storage technology, employing the sampled data generated during the operation of systems for control objectives has become increasingly reliable and convenient. Designing learning-based control strategies with sampled data can further reduce the dependency on model knowledge. There have been several results that presented learning-based control strategies within data-driven frameworks (see, e.g., \citep{ZHou2013InformationScience,CDPersis2019TAC,BChu2023CDC}). However, as mentioned earlier, these results pay few attention to how to efficiently exploit the successful experience provided by guest systems.

Motivated by aforementioned discussions, this paper aims to propose a similarity-based learning control framework by leveraging the sampled data. With the application of the sampled input/output (I/O) data, the proposed similarity-based learning control is expected to overcome the shortcomings W1) and W2) of the exisiting learning-based control strategies. Specifically, we are interested in the scenarios where the guest system has completed the specific task through a repetitive trial-and-error process and provided its successful experiences to the host system. The similarity-based learning control framework can assess the value of the received experience and ensure that the host system can efficiently leveraging the experience to accomplish the same task without incurring the costly trial-and-error process. The mechanism of the similarity-based learning control framework is depicted in Fig. \ref{fig-mechanism}.
\begin{figure}[!ht]
	\centering
	\includegraphics[width=0.9\columnwidth]{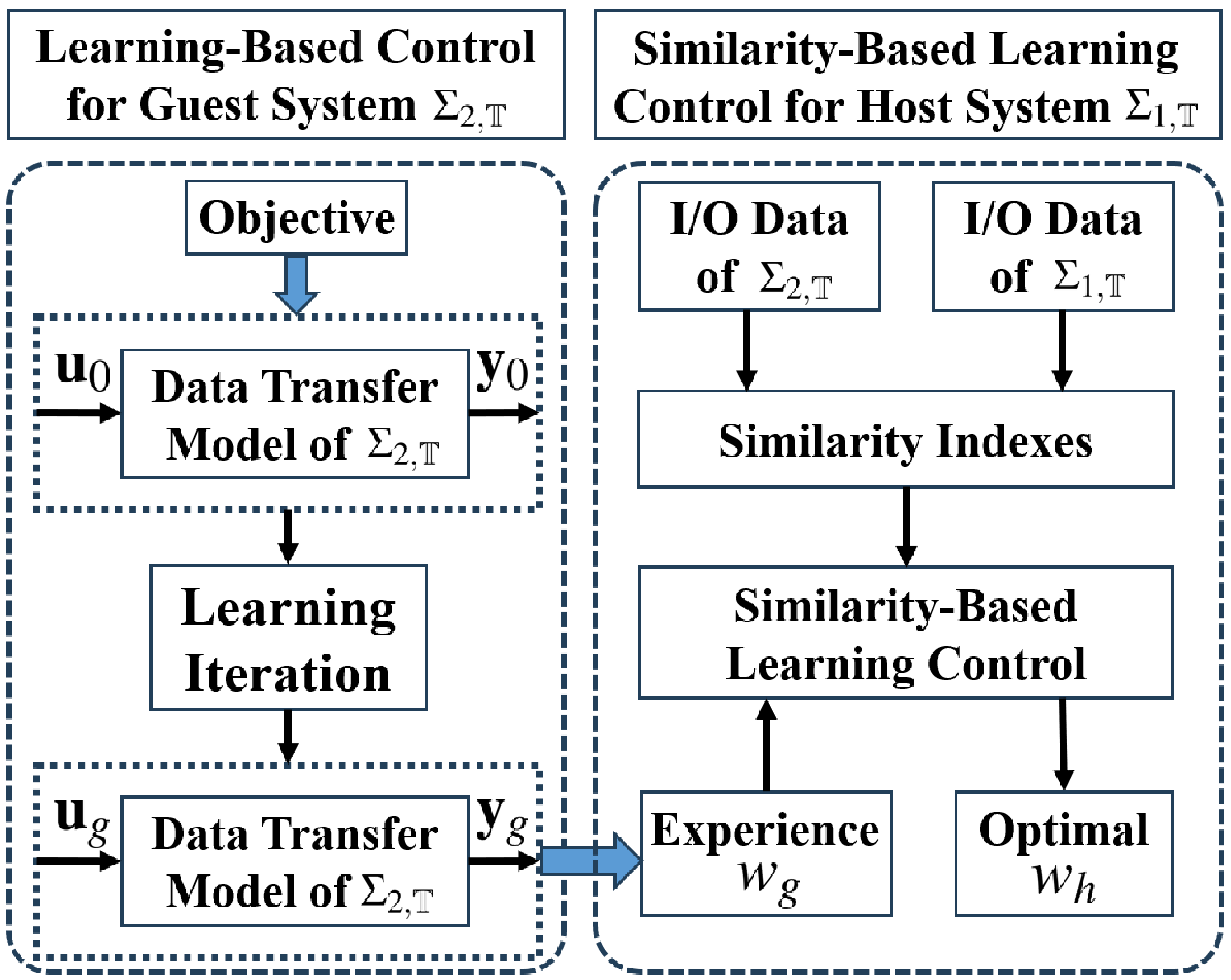}
	\caption{Similarity-based learning control exploiting sampled data.}
	\label{fig-mechanism}
\end{figure}
Main contributions of this paper can be summarized as follows.
\begin{enumerate}
	\item[C1)] We innovatively propose the basic definitions of similarity and similarity indexes between two admissible behaviors, which can qualitatively and quantitatively measure the benifits of guest system's successful experience of to the control of the host system, respectively;
	\item[C2)] By designing offline I/O test principles for heterogeneous linear time-varying (LTV) systems and exploiting the collected I/O data, we develop a data-based criterion for verifying the similarity and present an efficient strategy for calculating the similarity indexes;
	\item[C3)] By leveraging the calculated similarity indexes and exploiting helpful projection techniques, we establish a similarity-based learning control framework from the offline sampled data. As a result, this framework allows the host system to directly leverage the successful experience of the guest system to accomplish the specific tasks.
\end{enumerate}

The rest of this paper is organized as follows. We present the preliminaries on admissible behavior of LTV systems and formulate the similarity-based learning control problems in Section \ref{sec:Preliminaries and Problem Statement}. In Section \ref{subsec:Data-Based Construction For Admissible Behaviors}, we design the offline I/O test principles and reconstruct the admissible behaviors from the sampled data. Afterward, we introduce the definitions of similarity and similarity indexes, and develop a data-based criterion for verifying the similarity and a data-based strategy for efficiently calculating the similarity indexes in section \ref{subsec:Similarity and Similarity Indexes}. In Section \ref{sec:Similarity-Based Learning Control Strategy}, by exploiting the calculated similarity indexes and projection techniques, the similarity-based learning control framework is presented exploiting the sampled data. Finally, Section \ref{sec:Simulation Example} provides illstrative simulations, and Section \ref{sec:Conclusions} summarizes the contributions in this paper.

\emph{Notations:} Let $\mathbb{Z}_N=\{0,1,\cdots,N\}$ and $\mathbb{Z}_+=\{0,1,2,\cdots\}$. Let $\mathbb{R}$ be the set involving all real numbers, and $\mathbb{R}^n$ involves all $n$-dimensional real vectors whose entries locate in $\mathbb{R}$. For any matrix $A$, its transpose and kernel space are denoted as $A^{\rm T}$ and $\text{ker}\left(A\right)$, respectively. The linear space spanned by the columns of $A$ is denoted as $\text{span}(A)$. For arbitrary vectors $a,b\in\mathbb{R}^n$, the standard inner product $\langle a,b\rangle$ refers to $a^{\rm T}b$, and the induced norm is correspondingly defined as $\left\|a\right\|=\sqrt{\langle a,a \rangle}$. The identity and null matrices with appropriate dimensions are denoted as $I$ and $0$, respectively. Given $s_1,s_2,\cdots,s_n\in\mathbb{R}$, the symbol $\text{diag}(s_1,s_2,\cdots,s_n)$ represents the diagonal matrix whose diagonal entries are $s_1,s_2,\cdots,s_n$.

\section{Admissible behavior and problem statement} \label{sec:Preliminaries and Problem Statement}
The preliminaries of admissible behavior are firstly introduced. We consider two unknown heterogeneous LTV systems whose dynamics within the time duration $\mathbb{T}$ are represented as
\begin{equation} \label{statespacemodel}
	\Sigma_{i,\mathbb{T}}:\left\{
	\begin{aligned}
		x_i(t+1)&=A_i(t)x_i(t)+B_i(t)u_i(t) \\
		y_i(t)&=C_i(t)x_i(t)+D_i(t)u_i(t)
	\end{aligned}
	\right.,\ t\in\mathbb{T}, \ i\in\{1,2\}. \hfill
\end{equation}
Here, the subscripts $i=1$ and $i=2$ are employed to refer to the host system and guest system, respectively, and the host system $\Sigma_{1,\mathbb{T}}$ can receive experience from the guest system $\Sigma_{2,\mathbb{T}}$. Without loss of generality, the time duration is assumed to be $\mathbb{T}:=\mathbb{Z}_{T-1}$. The input and output are denoted as $u_i(t)\in\mathbb{R}^{n_u}$ and $y_i(t)\in\mathbb{R}^{n_y}$, respectively. The internal state with unknown dimension is denoted as $x_i(t)\in\mathbb{R}^{\bullet}$, and the unknown time-varying model matrices with appropriate dimension is represented by $\left\{A_i(t),B_i(t),C_i(t),D_i(t)\right\}$. It is worth mentioning that the results proposed in this paper can be implemented equally to the scenarios where $D_i(t)\equiv0$ for all $t\in\mathbb{T}$, and the introduction of $D_i(t)$ is solely for a generalized expression. In order to investigate the I/O relationship over the entire time duration $\mathbb{T}$, the following supervectors
\begin{equation} \label{eq-supervectors}
	\begin{aligned}
		\mathbf{u}_i&=\begin{bmatrix}
			u_i^{\rm T}(0),\ u_i^{\rm T}(1), \cdots, u_i^{\rm T}(T-1)
		\end{bmatrix}^{\rm T}, \\
		\mathbf{y}_i&=\begin{bmatrix}
			y_i^{\rm T}(0),\ y_i^{\rm T}(1), \cdots, y_i^{\rm T}(T-1)
		\end{bmatrix}^{\rm T}, \\
		\mathbf{x}_i&=\begin{bmatrix}
			x_i^{\rm T}(0),\ x_i^{\rm T}(1), \cdots, x_i^{\rm T}(T-1)
		\end{bmatrix}^{\rm T}
	\end{aligned}
\end{equation}
are introduced. For a vector $w_i=\text{col}\left(\mathbf{u}_i,\mathbf{y}_i\right)\in\mathbb{R}^{n_wT}$ where $n_w=n_u+n_y$, if there exists some (may be non-unique) state supervector $\mathbf{x}_i$ such that $\left(\mathbf{u}_i,\mathbf{y}_i,\mathbf{x}_i\right)$ satisfies (\ref{statespacemodel}), then $w_i$ is called as a \emph{$T$-length trajectory} of $\Sigma_{i,\mathbb{T}}$. To capture the I/O transfer characteristics, the \emph{behavior} of $\Sigma_{i,\mathbb{T}}$, denoted by $\mathcal{B}_{i}$, is defined as the set involving all $T$-length trajectories, i.e.,
\begin{equation} \nonumber
	\mathcal{B}_{i}=\left\{w_i\in\mathbb{R}^{n_wT}\left|\exists \mathbf{x}_i\text{ such that }\left(\mathbf{u}_i,\mathbf{y}_i,\mathbf{x}_i\right)\text{ satisfies }(\ref{statespacemodel})\right.\right\}.
\end{equation}
It is worth mentioning that the above definition only focuses on the I/O transfer characteristics, but neglects the initially stored energy in the system $\Sigma_{i,\mathbb{T}}$, which can be characterized by $x_i(0)$, and its influence on the system responses. Without loss of generality, throughout this paper, we assume that the unknown initial state of $\Sigma_{i,\mathbb{T}}$ is $x_i(0)=x_{i0}$. Due to this consideration, we introduce a category of \emph{$T$-length admissible trajectories}, denoted by $w_{i,x_{i0}}$, which refer to those $T$-length trajectories who start from $x_i(0)=x_{i0}$. Correspondingly, the \emph{admissible behavior} is defined as follows.
\vspace{10pt}
\begin{defi} \rm
	For the LTV system $\Sigma_{i,\mathbb{T}},\ i\in\{1,2\}$ under the initial state $x_i(0)=x_{i0}$, its admissible behavior, denoted by $\mathcal{B}_{i,x_{i0}}$, refers to the set involving all $T$-length admissible trajectories, i.e.,
	\begin{equation} \nonumber
		\mathcal{B}_{i,x_{i0}}=\left\{w_{i,x_{i0}}\in\mathbb{R}^{n_wT}\left|w_{i,x_{i0}}\in\mathcal{B}_i\text{ and }x_i(0)=x_{i0}\right.\right\}.
	\end{equation}
\end{defi}
\begin{rema} \rm
	From the above definition, the admissible behavior is essentially a subset of the behavior, i.e., $\mathcal{B}_{i,x_{i0}}\in\mathcal{B}_i$, since it is subject to extra constraints with respect to the initial energy. Admissible behavior also has specific engineering implications since practical systems always commence with some initially stored energy. A direct example may be the RLC circuits system where the the initial charge of the capacitor has an impact on the system responses (see, e.g., \citep{ROrtega2003TAC}).
\end{rema}
Based on the aforementioned preliminaries, we formulate the to-be-addressed problems in this paper as follows.
\vspace{10pt}
\begin{problem} \rm
	For the unknown host system $\Sigma_{1,\mathbb{T}}$ with initial state $x_1(0)=x_{10}$ and unknown guest system $\Sigma_{2,\mathbb{T}}$ with initial state $x_2(0)=x_{20}$, let their admissible behaviors be denoted as $\mathcal{B}_{1,x_{10}}$ and $\mathcal{B}_{2,x_{20}}$, respectively. This paper focuses on dealing with the following problems:
	\begin{enumerate}
		\item[P1)] Appropriate offline I/O test principles need to be designed, based on which the admissible behaviors $\mathcal{B}_{1,x_{10}}$ and $\mathcal{B}_{2,x_{20}}$ can be reconstructed by exploiting the sampled I/O data; \label{problem-1}
		\item[P2)] \label{problem-2}The definitions of similarity and similarity indexes between two admissible behaviors $\mathcal{B}_{1,x_{10}}$ and $\mathcal{B}_{2,x_{20}}$ need to be introduced. Moreover, a data-based criterion for verifying the similarity and a data-based strategy for calculating the similarity indexes need to be developed;
		\item[P3)] \label{problem-3} Suppose that the guest system $\Sigma_{2,\mathbb{T}}$ has accomplished its task through the trial-and-error process and achieved the desired trajectory $w_g\in\mathcal{B}_{2,x_{20}}$. A similarity-based learning control framework for the host system $\Sigma_{1,\mathbb{T}}$ needs to be proposed such that it can accomplish the same control task by exploiting the successful experience of $\Sigma_{2,\mathbb{T}}$ and the sampled data. As a result, we can find a solution $w_h\in\mathcal{B}_{1,x_{10}}$ such that the difference $\Vert w_g-w_h\Vert$ is minimized.
	\end{enumerate}
\end{problem}

\section{Data-based verification for similarity and similarity indexes} \label{sec:Data-Based Verification For Similarity and Similarity Indexes}
\subsection{Data-based reconstruction for admissible behaviors}
\label{subsec:Data-Based Construction For Admissible Behaviors}
In this subsection, we aim at recovering the admissible behaviors $\mathcal{B}_{1,x_{10}}$ and $\mathcal{B}_{2,x_{20}}$ by exploiting the sampled data, such that the problem P1) can be addressed. Compared to the system matrices $\left\{A_i(t),B_i(t),C_i(t),D_i(t)\right\}$, the admissible behavior $\mathcal{B}_{i,x_{i0}}$ can accurately capture the I/O transfer and initial state-output transfer characteristics of the system $\Sigma_{i,\mathbb{T}}$, without involving the non-unique internal states.
Owing to the absence of model knowledge, the admissible behaviors $\mathcal{B}_{1,x_{10}}$ and $\mathcal{B}_{2,x_{20}}$ need to be identified, and the accuracy of identification is closely related to the sufficiency of the sampled data.
 Therefore, it is necessary to design the appropriate offline I/O test principles to ensure the data sufficiency. The following strategy can guarantee the construction of an alternative data-based representation for the admissible behavior. For the systems $\Sigma_{i,\mathbb{T}}$ where $i\in\{1,2\}$, at least $n_uT+1$ times I/O tests need to be conducted. In each I/O test, the system $\Sigma_{i,\mathbb{T}}$ starts from an unknown but fixed initial state $x_{i0}$, and the $k$-th test input over the entire time duration $\mathbb{T}$ is denoted as $\mathbf{u}_i^k\in\mathbb{R}^{n_uT}$. Correspondingly, the $k$-th test output over $\mathbb{T}$ is denoted as $\mathbf{y}_i^k\in\mathbb{R}^{n_yT}$, and all test input/output data are collected as
\begin{equation} \nonumber
	\begin{aligned}
		U^{Test}_i&=\begin{bmatrix}
			\mathbf{u}^0_i,\ \mathbf{u}^1_i,\ \cdots,\ \mathbf{u}^{n_uT}_i
		\end{bmatrix}\in\mathbb{R}^{n_uT\times\left(n_uT+1\right)},\\
		Y^{Test}_i&=\begin{bmatrix}
			\mathbf{y}^0_i,\ \mathbf{y}^1_i,\ \cdots,\ \mathbf{u}^{n_uT}_i
		\end{bmatrix}\in\mathbb{R}^{n_yT\times\left(n_uT+1\right)}.
	\end{aligned}
\end{equation}
For each pair of test input and output, let $w_i^k=\text{col}\left(\mathbf{u}_i^k,\mathbf{y}_i^k\right)$ where $i\in\{1,2\}$ and  $k\in\mathbb{Z}_{n_uT}$.
It can be immediately observed that every vector $w_i^k$ is an admissible trajectory of $\Sigma_{i,\mathbb{T}}$.
To collect a sufficient number of representative admissible trajectories, the test inputs need to be designed accoring to the following test principles.
\vspace{10pt}
\begin{testprin} \rm
	For the LTV system $\Sigma_{i,\mathbb{T}},\ i\in\{1,2\}$, the following offline test principles need to be conducted:
	\begin{enumerate}
		\item In the initial I/O test, the test input is designed as
		\begin{equation} \label{eq-principle1}
			\mathbf{u}^0_i=\begin{bmatrix}
				0^{\rm T}_{n_u},\ 0^{\rm T}_{n_u},\ \cdots,\ 0^{\rm T}_{n_u}
			\end{bmatrix}^{\rm T}\in\mathbb{R}^{n_uT};
		\end{equation}
		\item In the later $n_uT$ I/O tests, the test inputs are designed to satisfy the following rank condition
		\begin{equation} \label{eq-principle2}
			\text{rank}\left(\begin{bmatrix}
				\mathbf{u}^1_i,\ \mathbf{u}^2_i,\ \cdots,\ \mathbf{u}^{n_uT}_i
			\end{bmatrix}\right)=n_uT.
		\end{equation}
	\end{enumerate}
\end{testprin}
By leveraging the sampled data collected building upon the above test principles, we can proceed to construct a data-based representation for the admissible behavior.
\vspace{10pt}
\begin{lemma}
	\rm \label{lemma-data-based representation}
	For the LTV system $\Sigma_{i,\mathbb{T}},\ i\in\{1,2\}$ with initial state $x_{i0}$, its admissible behavior $\mathcal{B}_{i,x_{i0}}$ constitutes an affine set. Moreover, let the test input $U^{Test}_i$ fulfill the test principles (\ref{eq-principle1}) and (\ref{eq-principle2}). Then a vector $\overline{w}_{i,x_{i0}}=\text{col}\left(\overline{\mathbf{u}}_i,\overline{\mathbf{y}}_i\right)\in \mathcal{B}_{i,x_{i0}}$ if and only if there exists some $g_i\in\mathbb{R}^{n_uT}$ such that
	\begin{equation} \label{eq-data-basedrepresentation}
		\begin{bmatrix}
			\overline{\mathbf{u}}_i\\
			\overline{\mathbf{y}}_i
		\end{bmatrix}=W_ig_i+w_{i}^0
	\end{equation}
	where
	\begin{equation} \nonumber
		W_i=\begin{bmatrix}
			w_i^1-w_i^0,w_i^2-w_i^0,\cdots,w_i^{n_uT}-w_i^0
		\end{bmatrix}
	\end{equation}
\end{lemma}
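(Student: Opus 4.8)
The plan is to derive an explicit closed-form input/output representation of $\Sigma_{i,\mathbb{T}}$ by unrolling the state recursion in (\ref{statespacemodel}), read off the affine structure of $\mathcal{B}_{i,x_{i0}}$ directly, and then show that the zero-input test and the rank condition together recover exactly the offset and the subspace component of that affine set. First I would solve the state equation under $x_i(0)=x_{i0}$ to obtain $x_i(t)=\Phi_i(t,0)x_{i0}+\sum_{s=0}^{t-1}\Phi_i(t,s+1)B_i(s)u_i(s)$, where $\Phi_i(\cdot,\cdot)$ denotes the state-transition matrix, and substitute into the output equation. Stacking over $t\in\mathbb{T}$ yields a relation of the form $\overline{\mathbf{y}}_i=G_i\overline{\mathbf{u}}_i+H_ix_{i0}$, where $G_i$ is the (block lower-triangular) impulse-response matrix and $H_i$ is the initial-state-response matrix. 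Hence any admissible trajectory satisfies $\text{col}\left(\overline{\mathbf{u}}_i,\overline{\mathbf{y}}_i\right)=\begin{bmatrix}I\\G_i\end{bmatrix}\overline{\mathbf{u}}_i+\begin{bmatrix}0\\H_ix_{i0}\end{bmatrix}$, so that $\mathcal{B}_{i,x_{i0}}=\mathcal{S}_i+b_i$ with $\mathcal{S}_i=\text{span}\left(\begin{bmatrix}I\\G_i\end{bmatrix}\right)$ and $b_i=\text{col}\left(0,H_ix_{i0}\right)$. This exhibits $\mathcal{B}_{i,x_{i0}}$ as the translate of a linear subspace, which is exactly the definition of an affine set, settling the first claim; note also that the identity block forces the map $\overline{\mathbf{u}}_i\mapsto\begin{bmatrix}I\\G_i\end{bmatrix}\overline{\mathbf{u}}_i$ to be injective, so $\dim\mathcal{S}_i=n_uT$.

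Next I would connect the sampled data to $\mathcal{S}_i$ and $b_i$. Since $\mathbf{u}_i^0=0$ by (\ref{eq-principle1}), its trajectory is $w_i^0=\text{col}\left(0,H_ix_{i0}\right)=b_i$, so the zero-input test isolates the offset exactly. For $k\ge1$ I would compute $w_i^k-w_i^0=\begin{bmatrix}I\\G_i\end{bmatrix}\mathbf{u}_i^k$, whence $W_i=\begin{bmatrix}I\\G_i\end{bmatrix}U_i$ with $U_i=\left[\mathbf{u}_i^1,\cdots,\mathbf{u}_i^{n_uT}\right]$. The rank condition (\ref{eq-principle2}) makes the square matrix $U_i$ invertible, and right-multiplication by an invertible matrix preserves the column space; therefore $\text{span}\left(W_i\right)=\text{span}\left(\begin{bmatrix}I\\G_i\end{bmatrix}\right)=\mathcal{S}_i$.

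The two characterizations then give the equivalence immediately. If (\ref{eq-data-basedrepresentation}) holds for some $g_i$, then $\text{col}\left(\overline{\mathbf{u}}_i,\overline{\mathbf{y}}_i\right)\in\text{span}\left(W_i\right)+w_i^0=\mathcal{S}_i+b_i=\mathcal{B}_{i,x_{i0}}$; conversely, membership in $\mathcal{B}_{i,x_{i0}}=\mathcal{S}_i+b_i$ means $\text{col}\left(\overline{\mathbf{u}}_i,\overline{\mathbf{y}}_i\right)-w_i^0\in\mathcal{S}_i=\text{span}\left(W_i\right)$, so such a $g_i$ exists. The main obstacle is not the algebra but the bookkeeping of the lifted representation: one must verify that the test principles are strong enough to recover $\mathcal{S}_i$ in full, and the key point is that the injectivity supplied by the identity block (equivalently $\dim\mathcal{S}_i=n_uT$), matched against the $n_uT$ full-rank test inputs, guarantees that $\text{span}\left(W_i\right)$ neither under- nor over-shoots $\mathcal{S}_i$. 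A minor check along the way is the dimensional consistency of $g_i$ and the data matrices, which again hinges on $U_i$ being exactly square and nonsingular.
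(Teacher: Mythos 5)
Your proof is correct, and it reaches the lemma by a somewhat different (and more compact) route than the paper. Both arguments rest on the same lifted representation $\overline{\mathbf{y}}_i=G_i\overline{\mathbf{u}}_i+L_ix_{i0}$ (what you call $H_i$ is the paper's $L_i$; be aware that the paper reserves the symbol $H_i$ for the orthonormal basis matrix in (\ref{eq-unitbasesofW}), so your notation would clash if kept). The paper, however, never writes $\mathcal{B}_{i,x_{i0}}$ as an explicit image: it first proves affineness abstractly, by checking that the solution set of the LAE $\begin{bmatrix}-G_i, & I\end{bmatrix}w=L_ix_{i0}$ is closed under affine combinations, and then proves the two directions of the equivalence separately --- sufficiency by exhibiting $W_ig_i+w_i^0$ as an affine combination of the measured trajectories $w_i^0,w_i^1,\dots,w_i^{n_uT}$ (hence admissible by the affineness just established), and necessity by expanding an arbitrary input $\overline{\mathbf{u}}_i$ in the basis $\{\mathbf{u}_i^k\}$ supplied by the rank condition (\ref{eq-principle2}) and propagating through $G_i$ by superposition. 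You instead identify both sides as the same affine set in one stroke: $\mathcal{B}_{i,x_{i0}}=\mathrm{span}\left(\left[\begin{smallmatrix}I\\ G_i\end{smallmatrix}\right]\right)+\mathrm{col}\left(0,L_ix_{i0}\right)$ from the unrolled dynamics, and $\mathrm{span}\left(W_i\right)=\mathrm{span}\left(\left[\begin{smallmatrix}I\\ G_i\end{smallmatrix}\right]\right)$ from the factorization $W_i=\left[\begin{smallmatrix}I\\ G_i\end{smallmatrix}\right]U_i$ with $U_i$ square and nonsingular, after which both implications are a single set-membership statement. Your packaging buys three things: the affineness claim is an immediate consequence of the parametrization rather than a separate verification; the two directions of the ``if and only if'' collapse into one set equality; and you obtain $\dim\mathrm{span}\left(W_i\right)=n_uT$ for free, a fact the paper only asserts afterwards (Remark \ref{rema-decomposition}) when it claims $\mathcal{W}_i$ has dimension $n_uT$. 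The paper's element-wise argument, for its part, keeps every step phrased as an affine combination of measured trajectories, which matches the data-driven, behavioral framing of the rest of the paper, but mathematically it uses exactly the same ingredients (zero-input test isolating the offset, free input variable, rank condition) that you use.
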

\begin{proof}
	The proof is divided into three steps.
	
	Step 1. We prove that the admissible behavior $\mathcal{B}_{i,x_{i0}}$ constitutes an affine set.
	In order to investigate the relationship among the input, initial state, and output of $\Sigma_{i,\mathbb{T}}$ over the entire time duration $\mathbb{T}$, we introduce the I/O transfer matrix $G_i$ and the initial state-output transfer matrix $L_i$, both of which can be steadily constructed by employing the model matrices $\left\{A_i(t),B_i(t),C_i(t),D_i(t)\right\}$ (see, e.g., \citep{DABristow2006IEEECSM}). Then by leveraging the supervectors in (\ref{eq-supervectors}), it is derived that
		\begin{equation}\nonumber
				\mathbf{y}_i=G_i\mathbf{u}_i+L_ix_i(0),\ i\in\{1,2\}.
			\end{equation}
		Under the specific initial state $x_i(0)=x_{i0}$, any admissible trajectory $w_{i,x_{i0}}\in\mathcal{B}_{i,x_{i0}}$ must be the solution of the non-homogeneous linear algebraic equation (LAE) described by
		\begin{equation} \nonumber
				\begin{bmatrix}
						-G_i,\ I
					\end{bmatrix}w_{i,x_{i0}}=L_ix_{i0}.
		\end{equation}
		Let $w_{i,x_{i0}}$ and $v_{i,x_{i0}}$ be two admissible trajectories of $\Sigma_{i,\mathbb{T}}$, and let $\alpha\in\mathbb{R}$ be arbitrary real number. Then the affine combination of $w_{i,x_{i0}}$ and $v_{i,x_{i0}}$ can be represented by $\alpha w_{i,x_{i0}}+\left(1-\alpha\right)v_{i,x_{i0}}$, which must fulfill
		\begin{equation} \nonumber
			\begin{bmatrix}
				-G_i,\ I
			\end{bmatrix}\left(\alpha w_{i,x_{i0}}+\left(1-\alpha\right)v_{i,x_{i0}}\right)=L_ix_{i0}.
		\end{equation}
		That is, the affine combination of arbitrary admissible tracjectories remains an admissible trajectory of $\Sigma_{i,\mathbb{T}}$. Therefore, the admissible behavior $\mathcal{B}_{i,x_{i0}}$ constitutes an affine set.
		
		Step 2. We illustrate that $\overline{w}_{i,x_{i0}}=\text{col}\left(\overline{\mathbf{u}}_i,\overline{\mathbf{y}}_i\right)$ is an admissible trajectory of $\Sigma_{i,\mathbb{T}}$ if (\ref{eq-data-basedrepresentation}) holds. From the designed offline I/O test principles (\ref{eq-principle1}) and (\ref{eq-principle2}), it can be concluded that every $w_i^k=\text{col}\left(\mathbf{u}^j_i,\mathbf{y}^j_i\right),\ i\in\{1,2\},\ k\in\mathbb{Z}_{n_uT}$
		is an admissible trajectory of $\Sigma_{i,\mathbb{T}}$. Let the vector $g_i$ be represented as
		\begin{equation} \label{eq-pf2lemma1_eq1}
			g_i=\begin{bmatrix}
				g_i^1,\ g_i^2,\ \cdots,\ g_i^{n_uT}
			\end{bmatrix}^{\rm T}.
		\end{equation}
		By substituting (\ref{eq-pf2lemma1_eq1}) into (\ref{eq-data-basedrepresentation}), the vector $\text{col}\left(\overline{\mathbf{u}}_i,\overline{\mathbf{y}}_i\right)$ can be equivalently expreesed as
		\begin{equation} \nonumber
			\begin{aligned}
				\begin{bmatrix}
					\overline{\mathbf{u}}_i\\
					\overline{\mathbf{y}}_i
				\end{bmatrix}&=\sum_{k=1}^{n_uT}g_i^k\left(w_i^k-w_i^0\right)+w_i^0\\
				&=\sum_{k=1}^{n_uT} g_i^kw_i^k+\left(1-\sum_{k=1}^{n_uT}g_i^k\right)w_i^0.
			\end{aligned}
		\end{equation}
		From the above equation, it is observed that $\text{col}\left(\overline{\mathbf{u}}_i,\overline{\mathbf{y}}_i\right)$ is essentially the affine combination of the admissible trajectories $\{w_i^0,w_i^1,\cdots,w_i^{n_uT}\}$. Since it has been proved that $\mathcal{B}_{i,x_{i0}}$ constitutes an affine set, we can conclude that $\text{col}\left(\overline{\mathbf{u}}_i,\overline{\mathbf{y}}_i\right)$ is an admissible trajectory.
		
		Step 3. We prove that any $\overline{w}_{i,x_{i0}}=\text{col}\left(\overline{\mathbf{u}}_i,\overline{\mathbf{y}}_i\right)\in\mathcal{B}_{i,x_{i0}}$ can always be expressed in the form of (\ref{eq-data-basedrepresentation}). From the test principle (\ref{eq-principle2}), the test inputs $\{\mathbf{u}_i^1,\mathbf{u}_i^2,\cdots,\mathbf{u}_i^{n_uT}\}$ form a set of bases of the linear space $\mathbb{R}^{n_uT}$. Therefore, for any input $\overline{\mathbf{u}}_i\in\mathbb{R}^{n_uT}$, it can be represented by
		\begin{equation} \label{eq-pf2lemma1_eq2}
			\overline{\mathbf{u}}_i=\sum_{k=1}^{n_uT}g_i^k\mathbf{u}_i^k
		\end{equation}
		where $g_i^k,\ k\in\mathbb{Z}_{n_uT}\backslash\{0\}$ represent the combination coefficients. By imposing the input $\overline{\mathbf{u}}_i$ to the system $\Sigma_{i,\mathbb{T}}$ under the initial state $x_{i0}$, corresponding outputs can be expressed as
		\begin{equation} \label{eq-pf2lemma1_eq3}
			\begin{aligned}
				\overline{\mathbf{y}}_i&=G_i\overline{\mathbf{u}}_i+L_ix_{i0}\\
				&=\sum_{k=1}^{n_uT}g_i^k\left(\mathbf{y}_i^k-L_ix_{i0}\right)+L_ix_{i0}.
			\end{aligned}
		\end{equation}
		Since the input components emcompasses all free variables in the admissible trajectory, based on (\ref{eq-pf2lemma1_eq2}) and (\ref{eq-pf2lemma1_eq3}), any admissible trajectory $\overline{w}_{i,x_{i0}}=\text{col}\left(\overline{\mathbf{u}}_i,\overline{\mathbf{y}}_i\right)$ can be represented by
		\begin{equation} \nonumber
			\begin{bmatrix}
				\overline{\mathbf{u}}_i\\
				\overline{\mathbf{y}}_i
			\end{bmatrix}=\sum_{k=1}^{n_uT}g_i^k\left(
			\begin{bmatrix}
				\mathbf{u}_i^k\\
				\mathbf{y}_i^k
			\end{bmatrix}-\begin{bmatrix}
				0_{n_uT}\\
				L_ix_{i0}
			\end{bmatrix}
			\right)+\begin{bmatrix}
			0_{n_uT}\\
			L_ix_{i0}
			\end{bmatrix}.
		\end{equation}
		According to the I/O test principle (\ref{eq-principle1}), the vector $w_{i}^0$ can be equivalently expressed as $w_{i}^0=\text{col}\left(0_{n_uT},L_ix_{i0}\right)$. Let $g_i=\begin{bmatrix}
			g_i^1,\ g_i^2,\ \cdots,\ g_i^{n_uT}
		\end{bmatrix}^{\rm T}$. Then any $\overline{w}_{i,x_{i0}}=\text{col}\left(\overline{\mathbf{u}}_i,\overline{\mathbf{y}}_i\right)\in\mathcal{B}_{i,x_{i0}}$ can be represented in the form of
		\begin{equation}\nonumber
			\begin{aligned}
				\begin{bmatrix}
					\overline{\mathbf{u}}_i\\
					\overline{\mathbf{y}}_i
				\end{bmatrix}&=\sum_{k=1}^{n_uT}g_i^k\left(w_i^k-w_i^0\right)+w_i^0\\
				&=W_ig_i+w_i^0.
			\end{aligned}
		\end{equation}
		Therefore, any admissible trajectory $\overline{w}_{i,x_{i0}}\in\mathcal{B}_{i,x_{i0}}$ can be expressed as (\ref{eq-data-basedrepresentation}). 
		
		Based on above three steps, the proof is completed.
\end{proof}
\begin{rema} \rm \label{rema-decomposition}
	Following the established data-based representation (\ref{eq-data-basedrepresentation}) for the admissible behavior, some helpful geometric properties of the admissible behavior can be further explored. By leveraging the offline I/O data, the admissible behavior $\mathcal{B}_{i,x_{i0}}$ can be decomposed into the sum of subspace and offset components (see, e.g., \citep{SBoyd2004convex}) as
	\begin{equation} \label{eq-databaseddecomposition}
		\begin{aligned}
			\mathcal{B}_{i,x_{i0}}&=\mathcal{W}_i+w_{i}^0,\\
			\mathcal{W}_i&=\text{span}\left(W_i\right).
		\end{aligned}
	\end{equation}
	From the offline test principles (\ref{eq-principle1}) and (\ref{eq-principle2}), $\mathcal{W}_i$ is a subspace in Euclidean space $\mathbb{R}^{n_wT}$ and is of dimension $n_uT$, which is exactly the number of free input channels.
	For simplicity, we define
	\begin{equation}\label{eq-unitbasesofW}
		H_i=\begin{bmatrix}
			\alpha_i^1,\ \alpha_i^2,\ \cdots,\ \alpha_i^{n_uT}
		\end{bmatrix}\in\mathbb{R}^{n_wT\times n_uT}
	\end{equation}
	where the vectors $\left\{\alpha_i^1,\ \alpha_i^2,\ \cdots,\ \alpha_i^{n_uT}\right\}$ form a set of unit orthogonal bases of $\mathcal{W}_i$, such that the subspace component can be simply expressed as $\mathcal{W}_i=\text{span}\left(H_i\right)$. Additionally, $H_i$ can be readily obtained through the Gram–Schmidt process by leveraging the sampled I/O data $\left(U_i^{Test},Y_i^{Test}\right)$.
\end{rema}

To further explore the geometric properties between two affine set, we present the definition of principal angles between two subspaces.
\begin{defi} \rm \label{defi-principalangles} \vspace{10pt}
	(\citep{PAAbsil2006LAIA}) For two subspaces $\mathcal{R}_1\subset\mathbb{R}^{n}$ and $\mathcal{R}_2\subset\mathbb{R}^{n}$ with $\text{dim}\left(\mathcal{R}_1\right)=\text{dim}\left(\mathcal{R}_2\right)=m,\ m\leq n$, the principal angles between $\mathcal{R}_1$ and $\mathcal{R}_2$, denoted by
	\begin{equation} \nonumber
		\Theta\left(\mathcal{R}_1,\mathcal{R}_2\right)=\begin{bmatrix}
			\theta_1,\theta_2,\cdots,\theta_{m}
		\end{bmatrix},\ \theta_k\in\left[0,\pi/2\right],\ k\in\mathbb{Z}_{m}\backslash\{0\}
	\end{equation}
	are recursively defined as
	\begin{equation} \nonumber
		s_k=\cos\left(\theta_k\right)=\max_{p\in\mathcal{R}_1}\max_{q\in\mathcal{R}_2}\langle p,q\rangle=\langle p_k,q_k\rangle
	\end{equation}
	subject to
	\begin{equation} \nonumber
		\left\|p\right\|=\left\|q\right\|=1,\ \langle p,p_i\rangle=0,\ \langle q,q_i\rangle=0,\ i\in\mathbb{Z}_{k-1}\backslash\{0\}
	\end{equation}
	Moreover, the vectors $\left\{p_1,p_2,\cdots,p_m\right\}$ and $\left\{q_1,q_2,\cdots,q_m\right\}$ are called the principal angles associated with $\mathcal{R}_1$ and $\mathcal{R}_2$.
\end{defi}
The principal angles can serve as a powerful tool to characterize the similarity between two subspaces. Based on this definition, we can further explore the similarity and similarity indexes between two admissible behaviors.

\subsection{Similarity and similarity indexes}
\label{subsec:Similarity and Similarity Indexes}
This subsection is devoted to addressing the problem P2).
Regarding the host system $\Sigma_{1,\mathbb{T}}$ with $x_1(0)=x_{10}$ and the guest system $\Sigma_{2,\mathbb{T}}$ with $x_2(0)=x_{20}$, this subsection aims at presenting the definitions of similarity and similarity indexes. Moreover, by leveraging the collected offline I/O data, a data-based criterion for verifying the similarity and a data-based strategy for calculating the similarity indexes are developed. The definition of similarity between $\mathcal{B}_{1,x_{10}}$ and $\mathcal{B}_{2,x_{20}}$ is presented as follows.
\vspace{10pt}
\begin{defi} \rm \label{defi-similarity}
	The admissible behaviors $\mathcal{B}_{1,x_{10}}$ and $\mathcal{B}_{2,x_{20}}$ are said to be similar if $\mathcal{B}_{1,x_{10}}\cap\mathcal{B}_{2,x_{20}}\neq \emptyset$.
\end{defi}
Even if $\mathcal{B}_{1,x_{10}}\cap\mathcal{B}_{2,x_{20}}\neq \emptyset$ and $\mathcal{B}_{1,x_{10}}\cap\mathcal{B}_{3,x_{30}}\neq \emptyset$, it does not necessarily follow that $\mathcal{B}_{2,x_{20}}\cap\mathcal{B}_{3,x_{30}}\neq \emptyset$. From definition \ref{defi-similarity}, when two admissible behaviors $\mathcal{B}_{1,x_{10}}$ and $\mathcal{B}_{2,x_{20}}$ are similar, there always exist some common admissible trajectories $w_{com}\in\mathcal{B}_{1,x_{10}}\cap\mathcal{B}_{2,x_{20}}$ and common behavior $\mathcal{B}_{com}:=\mathcal{B}_{1,x_{10}}\cap\mathcal{B}_{2,x_{20}}$ involving all common admissible trajectories. Taking into account the fact that the admissible behavior can be decomposed as $\mathcal{B}_{i,x_{i0}}=\text{span}\left(H_i\right)+w_i^0$, a data-based criterion for verifying the similarity can be readily derived as follows.
\vspace{10pt}
\begin{lemma}\rm \label{lemma-similaritycriterion}
	For admissible behaviors $\mathcal{B}_{1,x_{10}}$ and $\mathcal{B}_{2,x_{20}}$, let the test inputs $U_1^{Test}$ and $U_2^{Test}$ satisfy the offline test principles (\ref{eq-principle1}) and (\ref{eq-principle2}), and let $H_1$ and $H_2$ be constructed as in (\ref{eq-unitbasesofW}). Then admissible behaviors $\mathcal{B}_{1,x_{10}}$ and $\mathcal{B}_{2,x_{20}}$ are similar if and only if there exist two vectors $l_1\in\mathbb{R}^{n_uT}$ and $l_2\in\mathbb{R}^{n_uT}$ such that
	\begin{equation} \label{eq-criterion for similarity}
		\begin{bmatrix}
			H_1 & H_2
		\end{bmatrix}\begin{bmatrix}
			l_1 \\
			l_2
		\end{bmatrix}=w_2^0-w_1^0.
	\end{equation}
	Moreover, by solving the above non-homogeneous LAE, the common behavior can be expressed as
	\begin{equation} \nonumber
		\mathcal{B}_{com}=\left\{w_{com}\in\mathbb{R}^{n_wT}\left|w_{com}=H_1l_1+w_1^0\right.\right\}.
	\end{equation}
\end{lemma}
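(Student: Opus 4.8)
The plan is to reduce the emptiness question for $\mathcal{B}_{1,x_{10}}\cap\mathcal{B}_{2,x_{20}}$ directly to the solvability of the stated non-homogeneous LAE, using only the subspace-plus-offset decomposition $\mathcal{B}_{i,x_{i0}}=\text{span}(H_i)+w_i^0$ guaranteed by Lemma \ref{lemma-data-based representation} and Remark \ref{rema-decomposition}. First I would unwind the membership conditions: a trajectory $w$ belongs to $\mathcal{B}_{1,x_{10}}\cap\mathcal{B}_{2,x_{20}}$ precisely when there are coefficient vectors $a,b\in\mathbb{R}^{n_uT}$ with $w=H_1a+w_1^0$ and simultaneously $w=H_2b+w_2^0$. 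Equating the two expressions and rearranging yields $H_1a-H_2b=w_2^0-w_1^0$.

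The key observation is then merely a sign convention: since $b$ ranges freely over $\mathbb{R}^{n_uT}$, setting $l_1=a$ and $l_2=-b$ turns this identity into $H_1l_1+H_2l_2=w_2^0-w_1^0$, i.e. exactly (\ref{eq-criterion for similarity}), and conversely any solution $(l_1,l_2)$ of (\ref{eq-criterion for similarity}) recovers $a=l_1$, $b=-l_2$. Hence $\mathcal{B}_{1,x_{10}}\cap\mathcal{B}_{2,x_{20}}\neq\emptyset$ if and only if (\ref{eq-criterion for similarity}) is consistent, which by Definition \ref{defi-similarity} is the asserted similarity criterion. Equivalently, consistency says that $w_2^0-w_1^0$ lies in $\text{span}([H_1\ H_2])=\mathcal{W}_1+\mathcal{W}_2$, which is the geometric reading of the condition.

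For the \emph{moreover} part, I would establish the two inclusions parameterizing $\mathcal{B}_{com}$. Given any solution $(l_1,l_2)$ of (\ref{eq-criterion for similarity}), the vector $w_{com}=H_1l_1+w_1^0$ lies in $\mathcal{B}_{1,x_{10}}$ by construction, and rewriting (\ref{eq-criterion for similarity}) as $H_1l_1+w_1^0=H_2(-l_2)+w_2^0$ shows $w_{com}=H_2(-l_2)+w_2^0\in\mathcal{B}_{2,x_{20}}$; thus $w_{com}\in\mathcal{B}_{com}$. Conversely, any $w_{com}\in\mathcal{B}_{com}$ can be written as $w_{com}=H_1l_1+w_1^0=H_2b+w_2^0$ for some $l_1,b$, so $(l_1,-b)$ solves (\ref{eq-criterion for similarity}) and reproduces $w_{com}$ through the formula $H_1l_1+w_1^0$. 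This yields the claimed expression for $\mathcal{B}_{com}$.

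The algebra here is genuinely routine; the only points demanding care are bookkeeping-level rather than conceptual. I expect the main obstacle to be twofold: justifying that the sign absorption $l_2=-b$ loses no generality (which holds precisely because $H_2$ spans $\mathcal{W}_2$ as $-b$ varies freely), and ensuring the parameterization of $\mathcal{B}_{com}$ is \emph{complete}, namely that every common trajectory arises as $H_1l_1+w_1^0$ for some solution, not merely that such expressions are common. Both are settled by the explicit back-and-forth substitution above, so no deeper machinery, in particular not the principal-angle apparatus of Definition \ref{defi-principalangles}, is needed for this statement.
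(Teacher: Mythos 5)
Your proof is correct and follows exactly the route the paper intends: the paper's own proof of this lemma is the one-line remark that it is ``a consequence of Lemma \ref{lemma-data-based representation} and Definition \ref{defi-similarity}'', and your argument simply spells out that consequence --- writing a common trajectory as $H_1a+w_1^0=H_2b+w_2^0$, absorbing the sign via $l_2=-b$ to obtain (\ref{eq-criterion for similarity}), and running the substitution backwards to parameterize $\mathcal{B}_{com}$. Nothing is missing, and no different machinery is used.
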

\begin{proof}
	A consequence of Lemma \ref{lemma-data-based representation} and Definition \ref{defi-similarity}.
\end{proof}

\begin{rema} \rm
	\label{rema-meaningofsimilarity}
	From Definition \ref{defi-similarity} and Lemma \ref{lemma-similaritycriterion}, it is observed that the similarity is a rather loose concept, serving only as a qualitative assessment indicator. However, it can assist to address the Problem P3) in some special cases. To be specific, once the experience provided by guest system satisfies $w_g\in\mathcal{B}_{com}$, the host system can directly adopt the successful experience $w_g$ to accomplish the same tasks. In this situation, the difference $\left\|w_g-w_h\right\|$ is equal to zero.
\end{rema}
Building upon the definition of similarity, in order to further quantitatively assess the benefits of the successful experience of the guest system  on the control of host system, the definition of similarity indexes needs to be proposed. Since the admissible behavior $\mathcal{B}_{i,x_{i0}},\ i\in\{1,2\}$ can be decomposed into a sum of subspace and offset components, from a geometric perspective, the principal angles between two subspaces $\mathcal{W}_1$ and $\mathcal{W}_2$ can serve as a powerful tool. Based on Definition \ref{defi-principalangles}, the similarity indexes between two admissible behaviors are defined as follows.
\vspace{10pt}
\begin{defi} \rm \label{defi-similarityindexes}
	For similar admissible behaviors $\mathcal{B}_{1,x_{10}}$ and $\mathcal{B}_{2,x_{20}}$, let them be decomposed as (\ref{eq-databaseddecomposition}). The similarity indexes between $\mathcal{B}_{1,x_{10}}$ and $\mathcal{B}_{2,x_{20}}$, denoted by $\textbf{SI}\left(\mathcal{B}_{1,x_{10}},\mathcal{B}_{2,x_{20}}\right)$, refer to the cosine of the principal angles between $\mathcal{W}_1$ and $\mathcal{W}_2$, that is,
	\begin{equation}\nonumber
		\textbf{SI}\left(\mathcal{B}_{1,x_{10}},\mathcal{B}_{2,x_{20}}\right):=\cos\Theta\left(\mathcal{W}_1,\mathcal{W}_2\right).
	\end{equation}
\end{defi}
\vspace{10pt}
\begin{rema}
	\rm \label{rema-properties of SI}
	The similarity indexes $\textbf{SI}\left(\cdot,\cdot\right)$ can be regarded as the function with respect to two intersecting affine sets, and $\textbf{SI}$ has the following properties:
	\begin{enumerate}
		\item[P1)] $\textbf{SI}\left(\mathcal{B}_{1,x_{10}},\mathcal{B}_{2,x_{20}}\right)$=$\textbf{SI}\left(\mathcal{B}_{2,x_{20}},\mathcal{B}_{1,x_{10}}\right)$, $\forall \mathcal{B}_{1,x_{10}}$, $\forall \mathcal{B}_{2,x_{20}}$;
		\item[P2)] $\textbf{SI}\left(\mathcal{B}_{1,x_{10}},\mathcal{B}_{1,x_{10}}\right)=1_{n_uT}^{\rm T}$, $\forall \mathcal{B}_{1,x_{10}}$;
		\item[P3)] If $\textbf{SI}\left(\mathcal{B}_{1,x_{10}},\mathcal{B}_{2,x_{20}}\right)=\textbf{SI}\left(\mathcal{B}_{1,x_{10}},\mathcal{B}_{3,x_{30}}\right)=1_{n_uT}^{\rm T}$, then $\text{\textbf{SI}}\left(\mathcal{B}_{2,x_{20}},\mathcal{B}_{3,x_{30}}\right)=1_{n_uT}^{\rm T}$, $\forall \mathcal{B}_{1,x_{10}}$, $\forall \mathcal{B}_{2,x_{20}}$, $\forall \mathcal{B}_{3,x_{30}}$.
	\end{enumerate}
\end{rema}
\vspace{10pt}
\begin{rema} \rm \label{rema-modelindependent}
	By leveraging the offline test principles (\ref{eq-principle1}) and (\ref{eq-principle2}), the decomposition in (\ref{eq-databaseddecomposition}) can be constructed from sampled data, ensuring that the calculation of similarity indexes is model-independent. Additionally, the similarity indexes between two admissible behaviors are independent on the offset components $w_{i,\text{off}}$, which can be interpreted through a geometric perspective. Since two similar admissible behaviors are essentially intersecting affine hyperplanes in Euclidean spaces, the offset components only cause translations of the corresponding affine hyperplanes and does not affect their intersection and the principal angles. Compared to the concept of similarity proposed in Definition \ref{defi-similarity}, the similarity indexes are quantitative assessment indicator. The similarity indexes closer to $1^{\rm T}_{n_uT}$ indicates that two admissible behaviors $\mathcal{B}_{1,x_{10}}$ and $\mathcal{B}_{2,x_{20}}$ are more similar.
\end{rema}

Although Definition \ref{defi-similarityindexes} presents the concept of the similarity indexes, such a definition is inefficient to calculate the similarity indexes. In order to develop an efficient calculation strategy, we define the singular value decomposition (SVD) of the matrix $H_1^{\rm T}H_2$ as
\begin{equation} \label{eq-SVD}
	H_1^{\rm T}H_2=UDV^{\rm T}
\end{equation}
where
\begin{equation} \nonumber
	D=\text{diag}\left(s_1,s_2,\cdots,s_{n_uT}\right),\ s_1\geq s_2\geq \cdots\geq s_{n_uT}>0.
\end{equation}
Through the designed offline I/O test principles (\ref{eq-principle1}) and (\ref{eq-principle2}), a data-based strategy can be proposed to efficiently calculate the similarity indexes, which is demonstrated in the following theorem.
\vspace{10pt}
\begin{theorem} \rm \label{theorem-calculationofsimilarityindexes}
	For admissible behaviors $\mathcal{B}_{1,x_{10}}$ and $\mathcal{B}_{2,x_{20}}$, let
	\begin{enumerate}
		\item The test inputs $U_1^{Test}$ and $U_2^{Test}$ fulfill the offline test principles (\ref{eq-principle1}) and (\ref{eq-principle2});
		\item The matrices $H_1$ and $H_2$ be constructed as in (\ref{eq-unitbasesofW});
		\item The SVD of $H_1^{\rm T}H_2$ be given as (\ref{eq-SVD}) where $D=\text{diag}\left(s_1,s_2,\cdots,s_{n_uT}\right)$ and $s_1\geq s_2\geq \cdots\geq s_{n_uT}>0$.
	\end{enumerate}
	If (\ref{eq-criterion for similarity}) is solvable,	then the similarity indexes between $\mathcal{B}_{1,x_{10}}$ and $\mathcal{B}_{2,x_{20}}$ can be calculated as
	\begin{equation} \label{eq-calculation of similarity indexes}
		\textbf{SI}\left(\mathcal{B}_{1,x_{10}},\mathcal{B}_{2,x_{20}}\right)=\begin{bmatrix}
			s_1,\ s_2,\ \cdots,\ s_{n_uT}
		\end{bmatrix}.
	\end{equation}
	Moreover, the principal vectors associated with $\mathcal{W}_1$ and $\mathcal{W}_2$ are given by $H_1U$ and $H_2V$.
\end{theorem}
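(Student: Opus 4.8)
The plan is to show that the recursively defined cosines $s_k=\cos\theta_k$ of Definition \ref{defi-principalangles}, specialized to the subspaces $\mathcal{W}_1=\text{span}(H_1)$ and $\mathcal{W}_2=\text{span}(H_2)$, are exactly the singular values of $H_1^{\rm T}H_2$ appearing in (\ref{eq-SVD}). The structural fact that makes everything go through is that, by construction in (\ref{eq-unitbasesofW}), the columns of each $H_i$ are a unit orthogonal basis, so $H_i^{\rm T}H_i=I$. This allows a norm-preserving coordinate change that collapses the constrained maximization defining the principal angles into a transparent diagonal optimization. Note that the solvability of (\ref{eq-criterion for similarity}) is used only to invoke Lemma \ref{lemma-similaritycriterion} and guarantee that $\mathcal{B}_{1,x_{10}}$ and $\mathcal{B}_{2,x_{20}}$ are similar, so that the similarity indexes of Definition \ref{defi-similarityindexes} are well defined; it plays no role in the computation itself.

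First I would parametrize unit vectors in each subspace. Every $p\in\mathcal{W}_1$ is uniquely $p=H_1a$ for some $a\in\mathbb{R}^{n_uT}$, and since $H_1^{\rm T}H_1=I$ we have $\left\|p\right\|=\left\|a\right\|$; likewise every $q\in\mathcal{W}_2$ is $q=H_2b$ with $\left\|q\right\|=\left\|b\right\|$. The objective then reduces to $\langle p,q\rangle=a^{\rm T}H_1^{\rm T}H_2b=a^{\rm T}UDV^{\rm T}b$. Introducing rotated coordinates $\tilde a=U^{\rm T}a$ and $\tilde b=V^{\rm T}b$, which preserve norms because $U,V$ are orthogonal, I would rewrite this as $\tilde a^{\rm T}D\tilde b=\sum_{j}s_j\tilde a_j\tilde b_j$, a separable form in which the singular values appear explicitly.

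Next I would execute the recursion. For $k=1$, maximizing $\sum_j s_j\tilde a_j\tilde b_j$ over unit $\tilde a,\tilde b$ is bounded above via $\sum_j s_j\tilde a_j\tilde b_j\leq s_1\sum_j|\tilde a_j||\tilde b_j|\leq s_1\left\|\tilde a\right\|\left\|\tilde b\right\|=s_1$, with equality at $\tilde a=\tilde b=e_1$; translating back gives $\cos\theta_1=s_1$ with principal vectors $p_1=H_1Ue_1$ and $q_1=H_2Ve_1$, i.e., the first columns of $H_1U$ and $H_2V$. For the inductive step, the orthogonality constraint $\langle p,p_i\rangle=0$ becomes $a^{\rm T}Ue_i=0$ (again using $H_1^{\rm T}H_1=I$ and $p_i=H_1Ue_i$), which in the rotated coordinate is precisely $\tilde a_i=0$, and symmetrically $\tilde b_i=0$. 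Hence at stage $k$ the problem is the same diagonal maximization restricted to indices $j\geq k$, whose maximum, thanks to the ordering $s_1\geq\cdots\geq s_{n_uT}$, is $s_k$, attained at $\tilde a=\tilde b=e_k$. This yields $\cos\theta_k=s_k$ and principal vectors equal to the $k$-th columns of $H_1U$ and $H_2V$, establishing (\ref{eq-calculation of similarity indexes}) together with the stated identification of principal vectors.

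The main obstacle I anticipate is the bookkeeping inside the recursion rather than any single deep estimate: I must verify that the orthogonality conditions against all previously selected principal vectors translate cleanly into zeroing out the first $k-1$ entries of both $\tilde a$ and $\tilde b$, and that the presorted ordering of the singular values forces the stagewise optimum to be exactly $s_k$ and not some larger surviving value. A minor accompanying point worth noting is that $\left\|H_1^{\rm T}H_2\right\|\leq 1$ automatically, so each $s_k\in[0,1]$ is legitimately a cosine, consistent with $\theta_k\in[0,\pi/2]$ in Definition \ref{defi-principalangles}.
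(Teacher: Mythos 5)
Your proof is correct and follows essentially the same route as the paper's: both exploit $H_i^{\rm T}H_i=I$ to pass isometrically between coefficient vectors and unit vectors in $\mathcal{W}_i$, thereby identifying the recursive principal-angle maxima of Definition \ref{defi-principalangles} with the singular values of $H_1^{\rm T}H_2$, and both use the solvability of (\ref{eq-criterion for similarity}) only to make the similarity indexes well defined. The sole difference is packaging: the paper quotes the recursive variational characterization of singular values as known and maps it onto the principal-angle definition via the substitution $p=H_1l$, $q=H_2v$, whereas you re-derive that characterization inline by rotating with $U,V$ and solving the resulting diagonal maximization through Cauchy--Schwarz and induction --- a slightly more self-contained but equivalent execution.
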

\begin{proof}
	Detailed proof is given in Appendix A.
\end{proof}
After presenting the definition of similarity indexes and calculating them from sampled data, we can pay our attention back to Problem P3), which is addressed in Section \ref{sec:Similarity-Based Learning Control Strategy}.

\section{Similarity-based learning control framework} \label{sec:Similarity-Based Learning Control Strategy}
In this section, a similarity-based learning control framework is proposed by leveraging the sampled I/O data to address Problem P3). We suppose that, through some powerful control strategies, the guest system $\Sigma_{2,\mathbb{T}}$ has already accomplished its tasks and learned the admissible trajectory $w_g$. The core idea of the similarity-based learning control framework lies in that when the host system $\Sigma_{1,\mathbb{T}}$ is confronted with the same tasks, the successful experience of the guest system can provide helpful guidance.
Moreover, the benefits of the successful experience of the guest system to the host system can be quantitively assessed via the similarity indexes introduced in Section \ref{sec:Data-Based Verification For Similarity and Similarity Indexes},.

Specifically, as we revisit Problem P3), it is evident that the to-be-sought $w_h$ is essentially the orthogonal projection of $w_g$ onto $\mathcal{B}_{1,x_{10}}$. Existing learning-based control strategies depend on the model information of $\Sigma_{1,\mathbb{T}}$, adjusting the controller parameters through repetitive trial-and-error process to ultimately find $w_h$, which minimizes the difference $\left\|w_g-w_h\right\|$. With respect to the mechanism of the existing learning-based control strategies, an illustrative example in the $3$-dimensional Euclidean space $\mathbb{R}^3$ is depicted in Fig. \ref{fig-similarlearning1}. 

In contrast, the similarity-based learning control framework aims to directly obtain $w_h$ via projection techniques by employing the similarity indexes and the successful experience of the guest system. Consequently, the trial-and-error processes are no longer needed. Likewise, an illustrative example is depicted in Fig. \ref{fig-similarlearning2}, where $w_h$ can be efficiently calculated by exploiting the similarity indexes $\cos\Phi$ and $w_g$, ensuring that $\left\|w_h-w_g\right\|$ is minimized.
\begin{figure}[!ht]
	\centering
	\includegraphics[width=0.8\columnwidth]{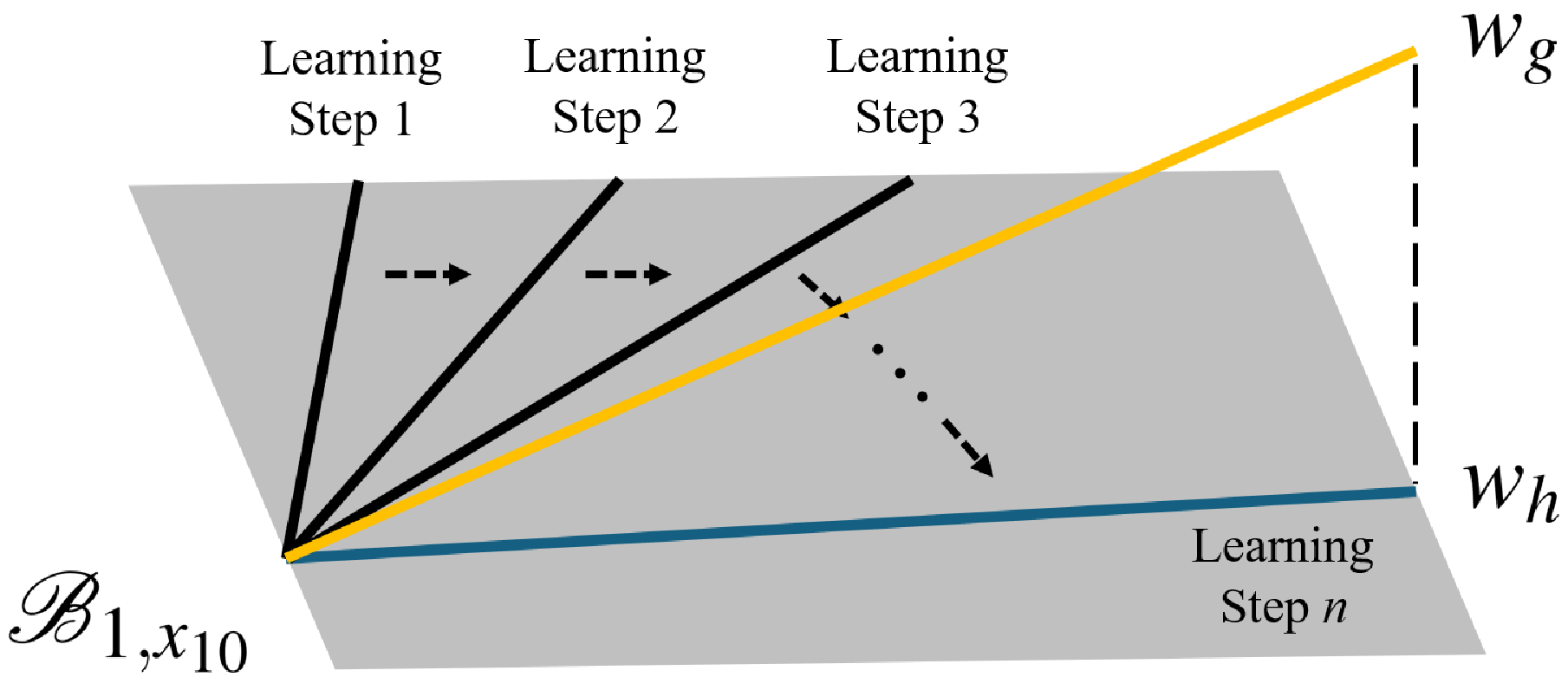}
	\caption{Existing learning-based control strategies for seeking $w_h$.}
	\label{fig-similarlearning1}
\end{figure}
\begin{figure}[!ht]
	\centering
	\includegraphics[width=0.8\columnwidth]{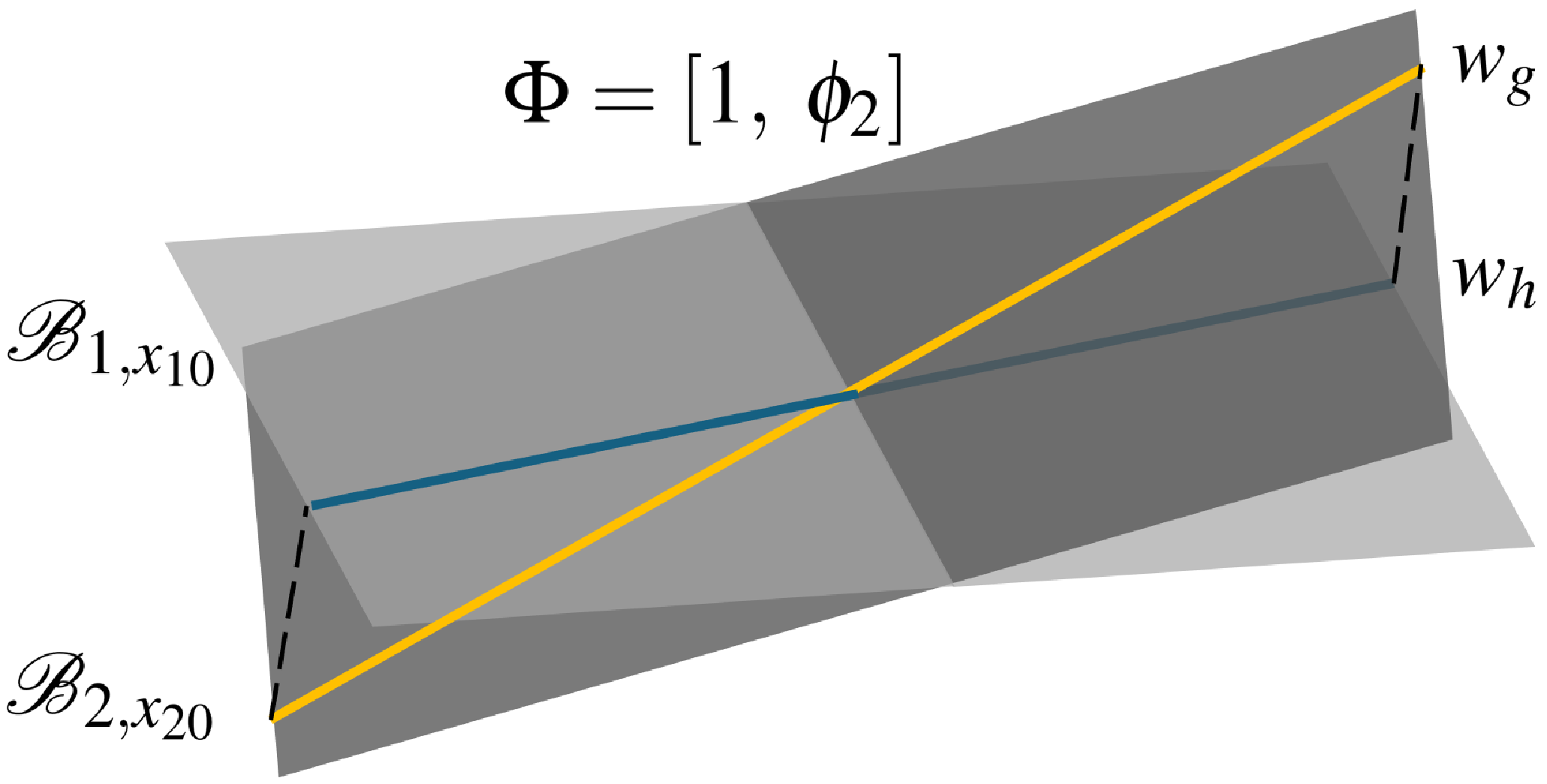}
	\caption{Similarity-based learning control strategy for seeking $w_h$.}
	\label{fig-similarlearning2}
\end{figure}
To present the similarity-based learning control framework more precisely, the orthogonal projection operator onto the subspace $\mathcal{W}_1$ is denoted by $P_{\mathcal{W}_1}\left(\cdot\right):\mathbb{R}^{n_wT}\rightarrow\mathcal{W}_1$. Correspondingly, the orthogonal projection operator onto the admissible behavior $\mathcal{B}_{1,x_{10}}$ is denoted by $P_{\mathcal{B}_{1,x_{10}}}\left(\cdot\right):\mathbb{R}^{n_wT}\rightarrow\mathcal{B}_{1,x_{10}}$. That is, for any $w_g\in\mathbb{R}^{n_wT}$, $P_{\mathcal{B}_{1,x_{10}}}\left(w_g\right)$ refers to its orthogonal projection onto $\mathcal{B}_{1,x_{10}}$ that minimizes the difference $\left\|w_g-P_{\mathcal{B}_{1,x_{10}}}\left(w_g\right)\right\|$. Before presenting the similarity-based learning control framework, the following lemma is introduced as the preliminary.
\vspace{10pt}
\begin{lemma} \rm \label{lemma-projection}
	(\citep{JPlesnik2007LAIA})
	For the admissible behavior $\mathcal{B}_{1,x_{10}}$, let the test input $U_1^{Test}$ satisfy the offline test principles (\ref{eq-principle1}) and (\ref{eq-principle2}), and let $H_1$ be constructed as in (\ref{eq-unitbasesofW}). Then for all $x\in\mathbb{R}^{n_wT}$, the orthogonal projection onto $\mathcal{B}_{1,x_{10}}$ can be calculated by
	\begin{equation} \nonumber
		P_{\mathcal{B}_{1,x_{10}}}(x)=w_{1}^0+P_{\text{span}\left(H_1\right)}\left(x-w_{1}^0\right),\ \forall x\in\mathbb{R}^{n_wT}.
	\end{equation}
\end{lemma}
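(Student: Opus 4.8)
The plan is to exploit the affine decomposition of $\mathcal{B}_{1,x_{10}}$ already established in Lemma \ref{lemma-data-based representation} and Remark \ref{rema-decomposition}, reducing the orthogonal projection onto the affine set to the (known) orthogonal projection onto the underlying subspace $\text{span}(H_1)$ by a translation argument. First I would recall that, by (\ref{eq-databaseddecomposition}), the admissible behavior is the affine set $\mathcal{B}_{1,x_{10}}=\text{span}(H_1)+w_{1}^0$, so that the map $v\mapsto w_{1}^0+v$ is a bijection from the (closed) subspace $\text{span}(H_1)$ onto $\mathcal{B}_{1,x_{10}}$. Since $\mathcal{B}_{1,x_{10}}$ is a nonempty closed convex subset of $\mathbb{R}^{n_wT}$ (it contains $w_{1}^0$), for every $x$ the projection $P_{\mathcal{B}_{1,x_{10}}}(x)$ exists and is the unique minimizer of $\left\|x-w\right\|$ over $w\in\mathcal{B}_{1,x_{10}}$.

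Next I would carry out the change of variables $w=w_{1}^0+v$ with $v\in\text{span}(H_1)$ inside the minimization. Writing $\left\|x-w\right\|=\left\|(x-w_{1}^0)-v\right\|$, the problem of minimizing $\left\|x-w\right\|$ over $w\in\mathcal{B}_{1,x_{10}}$ becomes equivalent to minimizing $\left\|(x-w_{1}^0)-v\right\|$ over $v\in\text{span}(H_1)$. By the very definition of the subspace projection operator, the unique minimizer of the latter is $v^\star=P_{\text{span}(H_1)}(x-w_{1}^0)$. Substituting back through the bijection yields the claimed formula
\begin{equation}\nonumber
P_{\mathcal{B}_{1,x_{10}}}(x)=w_{1}^0+P_{\text{span}(H_1)}(x-w_{1}^0).
\end{equation}

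To make the argument fully rigorous I would, as a confirming step, verify the orthogonality (first-order optimality) condition characterizing projection onto an affine set: a point $w^\star\in\mathcal{B}_{1,x_{10}}$ equals $P_{\mathcal{B}_{1,x_{10}}}(x)$ if and only if the residual $x-w^\star$ is orthogonal to the direction subspace $\text{span}(H_1)$. For the candidate $w^\star=w_{1}^0+P_{\text{span}(H_1)}(x-w_{1}^0)$, the residual is $x-w^\star=(x-w_{1}^0)-P_{\text{span}(H_1)}(x-w_{1}^0)$, which is precisely the component of $x-w_{1}^0$ orthogonal to $\text{span}(H_1)$ and therefore lies in $\text{span}(H_1)^\perp$. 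This verifies the optimality condition and, together with uniqueness, identifies $w^\star$ as the sought projection.

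I do not anticipate a serious obstacle here, as the result is essentially the translation invariance of orthogonal projection onto affine sets; the only point requiring care is justifying that translating the feasible set by $w_{1}^0$ leaves the minimizing direction vector unchanged, which is exactly what the bijection $v\mapsto w_{1}^0+v$ guarantees. It would also be worth remarking that the formula does not depend on the particular offset representative, since any admissible trajectory could serve in place of $w_{1}^0$, consistent with the offset-independence observed in Remark \ref{rema-modelindependent}.
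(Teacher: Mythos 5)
Your proof is correct. Note, however, that the paper itself offers no proof of this lemma at all: it is imported verbatim from the cited reference (Plesnik's work on projections onto affine sets), so there is no in-paper argument to compare against. Your proposal supplies exactly the standard argument that the citation stands in for: (i) the set $\mathcal{B}_{1,x_{10}}=\text{span}\left(H_1\right)+w_1^0$ is a nonempty closed convex (indeed affine) subset of $\mathbb{R}^{n_wT}$, so the projection exists and is unique; (ii) the translation $w=w_1^0+v$ converts $\min_{w\in\mathcal{B}_{1,x_{10}}}\left\|x-w\right\|$ into $\min_{v\in\text{span}\left(H_1\right)}\left\|\left(x-w_1^0\right)-v\right\|$, whose unique minimizer is $P_{\text{span}\left(H_1\right)}\left(x-w_1^0\right)$ by definition of subspace projection; (iii) the orthogonality check confirms optimality. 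Both halves of your argument are valid, though either one alone would suffice: the change-of-variables step already identifies the minimizer, and the residual-orthogonality verification is an independent, equally complete route (it implicitly relies on the Pythagorean characterization of projections onto affine sets, which you state but do not prove --- acceptable, since it is standard). Your closing observation that the formula is independent of the choice of offset representative is also correct, since replacing $w_1^0$ by any $\tilde{w}\in\mathcal{B}_{1,x_{10}}$ shifts the argument of $P_{\text{span}\left(H_1\right)}$ by an element of $\text{span}\left(H_1\right)$, and the two changes cancel; this aligns with the offset-independence discussed in Remark \ref{rema-modelindependent} and is worth keeping, as it is the property the paper's later use of the lemma (in the proof of Theorem \ref{theorem-similaritybasedlearning}) quietly depends on.
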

By leveraging the offline sampled data, Lemma \ref{lemma-projection} calculates the orthogonal projection onto the admissible behavior via investigating another orthogonal projection onto the associated subspace component. In comparison to existing learning-based control strategies, the main superiority of the proposed similarity-based learning control framework lies in the fact that the term $P_{\text{span}\left(H_1\right)}\left(x-w_1^0\right)$ can be efficiently obtained by exploiting the similarity indexes and projection techniques, which is demonstrated as follows.
\vspace{10pt}
\begin{theorem} \rm \label{theorem-similaritybasedlearning}
	For admissible behaviors $\mathcal{B}_{1,x_{10}}$ and $\mathcal{B}_{2,x_{20}}$, let
	\begin{enumerate}
		\item[C1)] \label{condition1} The test inputs $U_1^{Test}$ and $U_2^{\text{Test}}$ fulfill the offline test principles (\ref{eq-principle1}) and (\ref{eq-principle2});
		\item[C2)] \label{condition2} The matrix $H_1$ and $H_2$ be constructed as in (\ref{eq-unitbasesofW});
		\item[C3)] \label{condition3} There exist $l_1$ and $l_2$ such that (\ref{eq-criterion for similarity}) holds;
		\item[C4)] \label{condition4} The SVD of $H_1^{\rm T}H_2$ be given as (\ref{eq-SVD}) where $D=\text{diag}\left(s_1,s_2,\cdots,s_{n_uT}\right)$ and $s_1\geq s_2\geq \cdots\geq s_{n_uT}>0$.
	\end{enumerate}
	For the learned admissible trajectory given by $w_g\in\mathcal{B}_{2,x_{20}}$, the optimal admissible trajectory $w_h\in\mathcal{B}_{1,x_{10}}$ can be calculated according to
	\begin{equation} \label{eq-calculation of w_h}
		w_h=H_1UD\overline{g}+P_{\text{span}\left(H_1\right)}\left(w_{2}^0-w_{1}^0\right)+w_1^0
	\end{equation}
	where $\overline{g}$ satisfies
	\begin{equation} \label{eq-representation of w_g}
		w_g=H_2V\overline{g}+w_{2}^0.
	\end{equation}
	In this situation, the difference $\left\|w_h-w_g\right\|$ is minimized, or equivalently, Problem P3) is addressed.
\end{theorem}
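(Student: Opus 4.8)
The plan is to recognize that the minimizer $w_h$ of $\left\|w_g-w_h\right\|$ over $w_h\in\mathcal{B}_{1,x_{10}}$ is precisely the orthogonal projection $P_{\mathcal{B}_{1,x_{10}}}\left(w_g\right)$, and then to evaluate this projection explicitly in terms of the data-based objects $H_1$, $U$, $D$, and $V$. First I would invoke Lemma \ref{lemma-projection}, which reduces the projection onto the affine set $\mathcal{B}_{1,x_{10}}$ to a projection onto its subspace component, yielding
\begin{equation} \nonumber
	w_h=P_{\mathcal{B}_{1,x_{10}}}\left(w_g\right)=w_1^0+P_{\text{span}\left(H_1\right)}\left(w_g-w_1^0\right).
\end{equation}

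Next I would exploit the fact that, by construction in (\ref{eq-unitbasesofW}), the columns of $H_1$ form a unit orthogonal basis of $\mathcal{W}_1$, so the orthogonal projector onto $\text{span}\left(H_1\right)$ admits the closed form $P_{\text{span}\left(H_1\right)}\left(x\right)=H_1H_1^{\rm T}x$. I would then split the argument using the representation (\ref{eq-representation of w_g}), writing $w_g-w_1^0=H_2V\overline{g}+\left(w_2^0-w_1^0\right)$, and apply linearity of the projector to obtain
\begin{equation} \nonumber
	P_{\text{span}\left(H_1\right)}\left(w_g-w_1^0\right)=H_1H_1^{\rm T}H_2V\overline{g}+P_{\text{span}\left(H_1\right)}\left(w_2^0-w_1^0\right).
\end{equation}
The central step is to simplify the first term using the SVD hypothesis (\ref{eq-SVD}): substituting $H_1^{\rm T}H_2=UDV^{\rm T}$ and using the orthogonality $V^{\rm T}V=I$ gives $H_1H_1^{\rm T}H_2V\overline{g}=H_1\left(UDV^{\rm T}\right)V\overline{g}=H_1UD\overline{g}$, and assembling the pieces recovers exactly (\ref{eq-calculation of w_h}). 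This is where condition C4), together with the similarity guaranteed by C3) through (\ref{eq-criterion for similarity}), enters: the diagonal matrix $D$ carries the cosines of the principal angles, so $w_h$ is reconstructed from the guest experience encoded in $\overline{g}$ (the coordinates of $w_g$ along the principal-vector basis $H_2V$), scaled channelwise by the similarity indexes and re-expressed in the principal-vector basis $H_1U$ of $\mathcal{W}_1$.

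The main obstacle is not computational, since each step is routine linear algebra, but lies in correctly aligning the SVD factors with the projector so that the $V$ factors cancel while the $U$ and $D$ factors survive; this alignment is exactly what makes the similarity indexes appear explicitly rather than remaining hidden inside $H_1^{\rm T}H_2$. I would also verify that $\overline{g}$ in (\ref{eq-representation of w_g}) is well defined: since $w_g\in\mathcal{B}_{2,x_{20}}$ the decomposition (\ref{eq-databaseddecomposition}) gives $w_g-w_2^0\in\mathcal{W}_2=\text{span}\left(H_2V\right)$, and the orthonormality of the columns of $H_2V$ fixes $\overline{g}=V^{\rm T}H_2^{\rm T}\left(w_g-w_2^0\right)$ uniquely. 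Finally, the optimality claim follows because the orthogonal projection onto a closed affine set is the unique norm-minimizing point, which settles Problem P3).
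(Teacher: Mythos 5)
Your proposal is correct, and its skeleton coincides with the paper's proof: both identify $w_h$ as the orthogonal projection $P_{\mathcal{B}_{1,x_{10}}}\left(w_g\right)$, reduce it via Lemma \ref{lemma-projection} to a projection onto the subspace component, represent $w_g$ as $H_2V\overline{g}+w_2^0$, and use linearity to isolate the term involving $\overline{g}$. Where you genuinely diverge is in the central computation. The paper computes $P_{\text{span}\left(H_1U\right)}\left(H_2V\right)$ column by column, invoking the principal-vector relations of Theorem \ref{theorem-calculationofsimilarityindexes}, namely $\langle \left(H_2V\right)_k,\left(H_1U\right)_k\rangle=s_k$ and $\langle \left(H_2V\right)_i,\left(H_1U\right)_j\rangle=0$ for $i\neq j$, to conclude that this projection equals $H_1UD$; this keeps the geometric similarity-index interpretation in the foreground. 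You instead use the closed-form projector $P_{\text{span}\left(H_1\right)}\left(x\right)=H_1H_1^{\rm T}x$ and cancel the SVD factors directly, $H_1H_1^{\rm T}H_2V\overline{g}=H_1UDV^{\rm T}V\overline{g}=H_1UD\overline{g}$, which is a more elementary one-line matrix argument that bypasses the principal-vector machinery entirely (the two computations are of course equivalent, since $\left(H_1U\right)^{\rm T}\left(H_2V\right)=U^{\rm T}\left(UDV^{\rm T}\right)V=D$). Your route buys brevity and self-containedness; the paper's route buys the interpretive link between the projection coefficients and the similarity indexes, which is the conceptual selling point of the framework. You also add two details the paper leaves implicit: the well-definedness of $\overline{g}$ (via orthonormality of the columns of $H_2V$) and the uniqueness of the norm-minimizing point in a closed affine set, both of which strengthen rather than alter the argument.
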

\begin{proof}
	Detailed proof is given in Appendix B.
\end{proof}

\begin{rema} \rm
	The similarity-based learning control framework proposed in Theorem \ref{theorem-similaritybasedlearning} can provide  an innovative perspective on learning-based control. When seeking the optimal trajectory $w_h\in\mathcal{B}_{1,x_{10}}$, we no longer require to repeatedly execute certain learning-based control strategies for $\Sigma_{1,\mathbb{T}}$. Alternatively, we can directly obtain the optimal trajectory $w_h$ by leveraging the successful experience of guest systems. From Theorem \ref{theorem-similaritybasedlearning}, it can be immediately observed that the closer $\textbf{SI}\left(\mathcal{B}_{1,x_{10}},\mathcal{B}_{2,x_{20}}\right)$ is to $1^{\rm T}_{n_uT}$, the smaller the difference $\Vert w_h-w_g \Vert$. Particularly, in the scenarios where $\textbf{SI}\left(\mathcal{B}_{1,x_{10}},\mathcal{B}_{2,x_{20}}\right)=1_{n_uT}^{\rm T}$, applying the control strategy in Theorem \ref{theorem-similaritybasedlearning} does not lead to any learning errors. That is, $w_h=w_g$ holds in such scenarios.
\end{rema}
\vspace{10pt}
\begin{rema}\rm
	Moreover, in those low-similarity scenarios, the similarity indexes between $\mathcal{B}_{1,x_{10}}$ and $\mathcal{B}_{2,x_{20}}$ can be compensated by interconnecting the host system with another auxiliary system (see, e.g., \cite{JCWillems2007CSM}, for more details). Through the compensation, it is expected that 
	\begin{equation} \nonumber
		\left\|1_{n_uT}^{\rm T}-\textbf{SI}\left(\mathcal{B}_{1,x_{10}},\mathcal{B}_{2,x_{20}}\right)\right\|>\left\|1_{n_uT}^{\rm T}-\textbf{SI}\left(\mathcal{B}^{\text{comp}}_{1,x_{10}},\mathcal{B}_{2,x_{20}}\right)\right\|
	\end{equation}
	where $\mathcal{B}^{\text{comp}}_{1,x_{10}}$ is the compensated admissible behavior. That is, we hope $\mathcal{B}^{\text{comp}}_{1,x_{10}}$ can be more similar to $\mathcal{B}_{2,x_{20}}$ via the compensation.
	Based on this consideration, satisfied learning performances can almost always be achieved when applying similarity-based learning control.
\end{rema}
\vspace{10pt}
\begin{rema} \rm
	Just like humans need to absorb a wide range of learning experiences from others, an increase in the number of experience provider will improve the control performance of the similarity-based learning. This is because, as the number of guest systems increases, there always exists a guest system whose admissible behavior shares more similarity with that of the host system. By adopting the successful experience of the ``most similar'' guest system, the similarity-based learning control framework can eventually result in better control performances.
\end{rema}

Building upon the previously proposed results, the procedure of executing the similarity-based learning control framework can be summarized in Algorithm \ref{alg-similarityindexes}.
\begin{algorithm}[htb]
	\caption{ Similarity-based learning control.}
	\label{alg-similarityindexes}
	\begin{algorithmic}[1]
		\REQUIRE
		\STATE Apply the test inputs $U_i^{Test}$ satisfying (\ref{eq-principle1}) and (\ref{eq-principle2}) to $\Sigma_{i,\mathbb{T}}$;
		\STATE Collect the offline I/O data in $\left(U_i^{Test},Y_i^{Test}\right)$;
		\STATE Obtain the data-based decomposition (\ref{eq-databaseddecomposition}) and (\ref{eq-unitbasesofW}) through Gram-Schmidt process.
		\ENSURE
		\STATE Check the similarity between $\mathcal{B}_{1,x_{10}}$ and $\mathcal{B}_{2,x_{20}}$ via (\ref{eq-criterion for similarity})\\
		\qquad \textbf{If} (\ref{eq-criterion for similarity}) is sovable, \textbf{then} go to step 5;\\
		\qquad \textbf{Else}, quit this algorithm.
		\STATE Calculate the SVD as $H_1^{\rm T}H_2=UDV^{\rm T}$;
		\STATE Obtain the similarity indexes as (\ref{eq-calculation of similarity indexes});
		\STATE Obtain the principal vectors as $H_1U$ and $H_2V$;
		\STATE Calculate the required $w_h$ via (\ref{eq-calculation of w_h}) and (\ref{eq-representation of w_g}).
	\end{algorithmic}
\end{algorithm}

\section{Simulation examples} \label{sec:Simulation Example}
For the illustration of the proposed similarity-based learning control frameworks, simulation examples are presented in this section. We provide a numerical example and simulation tests on the mobile robots simultaneously.
\vspace{10pt}
\begin{example} \rm
	Consider two heteronogeous discrete-time linear systems in the form of (\ref{statespacemodel}), and their model matrices are given as follows:
	\begin{equation} \nonumber
		\begin{aligned}
			A_1(t)&=\begin{bmatrix}
				0.05t & 1 & 0\\
				0 & 0.05t & 1\\
				-0.09 & -0.60 & -1.40+0.05t
			\end{bmatrix},\\
			A_2(t)&=\begin{bmatrix}
				0.05t & 1 & 0\\
				0 & 0.05t & 1\\
				-0.08 & -0.66 & -1.50+0.05t
			\end{bmatrix},\\
			B_1(t)&=B_2(t)=\begin{bmatrix}
				6\\
				0\\
				0.50
			\end{bmatrix},\ C_1(t)=C_2(t)=\begin{bmatrix}
				2\\
				1\\
				0
			\end{bmatrix}^{\rm T},\\
			D_1(t)&=D_2(t)=0,\ x_1(0)=\begin{bmatrix}
				0 \\ 0 \\ 1.02
			\end{bmatrix},\ x_2(0)=\begin{bmatrix}
				0 \\ 0 \\ 1
			\end{bmatrix}.
		\end{aligned}
	\end{equation}
	Here, we present the model knowledge solely for clear illustrating the simulation settings, and it will not be utilized for the design and analysis. Of note is that this type of difference often arises in scenarios where there exist uncertainties between the host system $\Sigma_{1,\mathbb{T}}$ and guest system $\Sigma_{2,\mathbb{T}}$. Let the host and guest systems be given a consistent output tracking task over the time duration $\mathbb{Z}_{34}$, with the reference output set as
	\begin{equation} \nonumber
		y_d(t)=e^{-0.1t}\sin(\frac{\pi}{5}t),\ \forall t\in\mathbb{Z}_{34}.
	\end{equation}
	Owing to the absence of model knowledge, offline I/O tests are needed to collect a sufficient number of admissible trajectories. The offline I/O tests need to be executed for at least 36 times, and the test inputs are designed as
	\begin{equation} \nonumber
		\mathbf{u}_i^0=0_{35},\ \begin{bmatrix}
			\mathbf{u}_i^1,\ \mathbf{u}_i^2,\ \cdots,\mathbf{u}_i^{35}
		\end{bmatrix}=I_{35}.
	\end{equation}
	With the designed test inputs, the proposed offline test principles (\ref{eq-principle1}) and (\ref{eq-principle2}) are satisfied, then the data-based representation (\ref{eq-data-basedrepresentation}) can be constructed, and the admissible behaviors $\mathcal{B}_{1,x_{10}}$ and $\mathcal{B}_{2,x_{20}}$ can be decomposed by leveraging (\ref{eq-databaseddecomposition}).
	To address the output tracking problem of the guest system $\Sigma_{2,\mathbb{T}}$, iterative learning control (ILC) that is a learning-based strategy can serve as a powerful tool. After 300 iterations of the algorithm, the tracking problem of the guest system $\Sigma_{2,\mathbb{T}}$ is perfectly addressed. The output and input of the guest system, denoted as $y_g(t)-\text{ILC}$ and $u_g(t)-\text{ILC}$, respectively, are depicted in Fig. \ref{fig-system12}.
	\begin{figure}[!ht]
		\centering
		{\includegraphics[width=0.9\columnwidth]{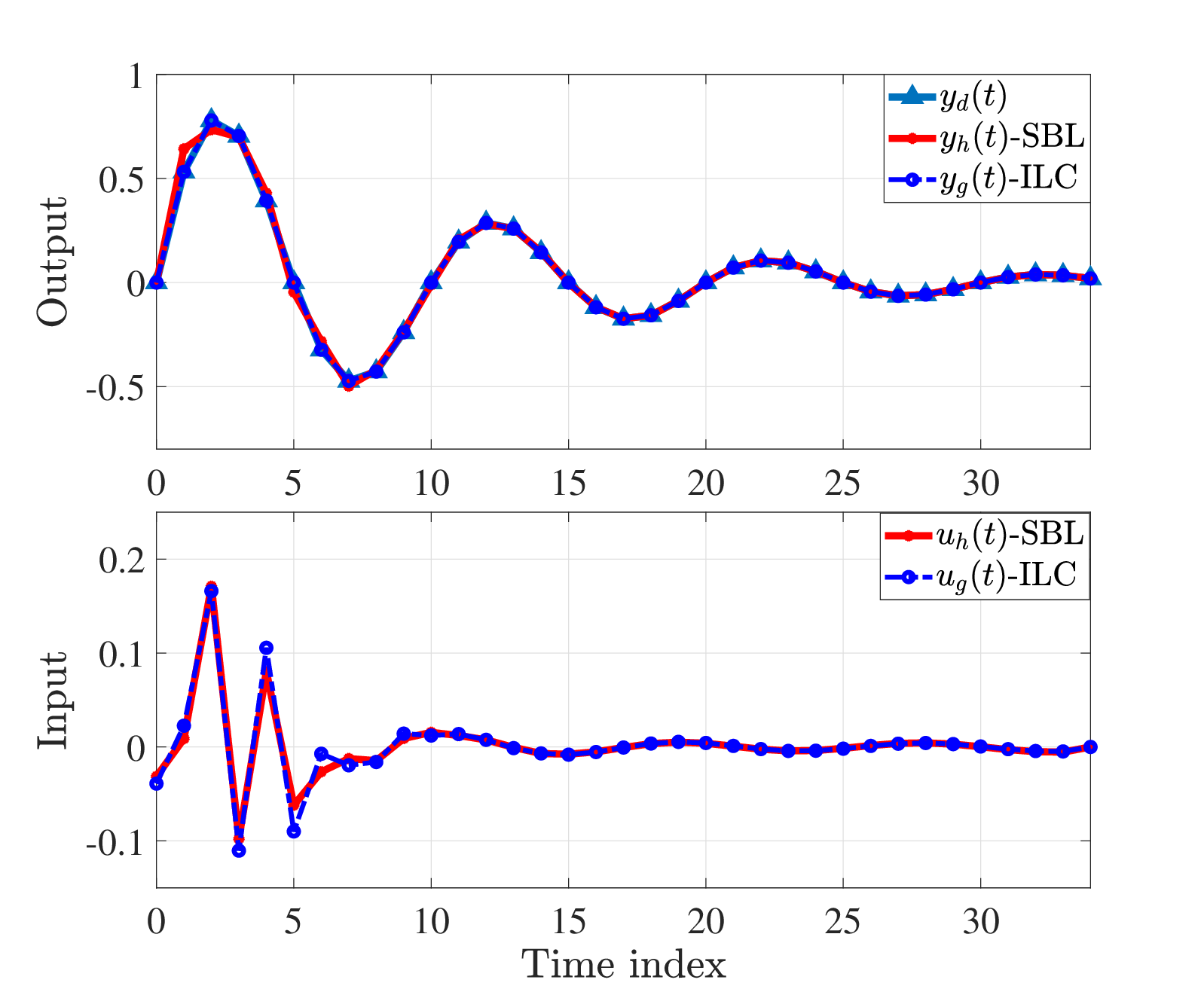}}
		\caption{Outputs of the system $\Sigma_{1,\mathbb{T}}$ and $\Sigma_{2,\mathbb{T}}$ for reference $y_d(t)$.}
		\label{fig-system12}
	\end{figure}
	With the obtained admissible trajectory $w_g\in\mathcal{B}_{2,x_{20}}$, the tracking issue of the host system no longer needs to resort to ILC, which depends on repetitive learning and trial-and-error. By exploiting the proposed similarity-based learning control framework, the required $w_h\in\mathcal{B}_{1,x_{10}}$ can be obtained through Theorem \ref{theorem-similaritybasedlearning}. The learned output and input of the host system, denoted by $y_h(t)$-SBL and $u_h(t)$-SBL, respectively, are depicted in Fig. \ref{fig-system12}. From Fig. \ref{fig-system12}, it can be observed that the similarity-based learning control framework achieves satisfied learning performances, and the tracking issue of the host system can be directly addressed.
	
	As a comparison, another guest system, denoted by $\Sigma_{3,\mathbb{T}}$, that is less similar with the host system $\Sigma_{1,\mathbb{T}}$ is also provided. The model knowledge of $\Sigma_{3,\mathbb{T}}$ is presented as follows:
	\begin{equation}\nonumber
		\begin{aligned}
			A_3(t)&=\begin{bmatrix}
				0.05t & 1 & 0 \\
				0 & 0.05t & 1 \\
				-0.20 & -0.20 & -1.3+0.05t
			\end{bmatrix},\\
			B_3(t)&=\begin{bmatrix}
				6\\
				0\\
				0.50
			\end{bmatrix},\ C_3(t)=\begin{bmatrix}
				2 \\
				1\\
				0
			\end{bmatrix}^{\rm T},\ D_3(t)=0.
		\end{aligned}
	\end{equation}
	The initial state of $\Sigma_{3,\mathbb{T}}$ is set as $x_3(0)=\begin{bmatrix}
		0.2,\ 0,\ 1
	\end{bmatrix}^{\rm T}$, and the tracking task for the reference output $y_d(t)$ are taken into account again. By applying ILC to $\Sigma_{3,\mathbb{T}}$, the tracking issue of $\Sigma_{3,\mathbb{T}}$ can be addressed. The output and input of $\Sigma_{3,\mathbb{T}}$, denoted as $y'_g(t)$-ILC and $u'_g(t)$-ILC, respectively, are depicted in Fig. \ref{fig-system13}. After obatining the admissible trajectory $w_{g}'\in\mathcal{B}_{3,x_{30}}$, the tracking problem of the host system can be directly addressed through the similarity-based learning control framework. The learned output and input of the system $\Sigma_{1,\mathbb{T}}$, denoted as $y'_h(t)$-SBL anf $u'_h(t)$-SBL, respectively, are depicted in Fig. \ref{fig-system13}. Since the guest system $\Sigma_{3,\mathbb{T}}$ is less similar with the host system $\Sigma_{1,\mathbb{T}}$, the performance brought by the similarity-based learning control is degraded.
	\begin{figure}[!ht]
		\centering
		{\includegraphics[width=0.9\columnwidth]{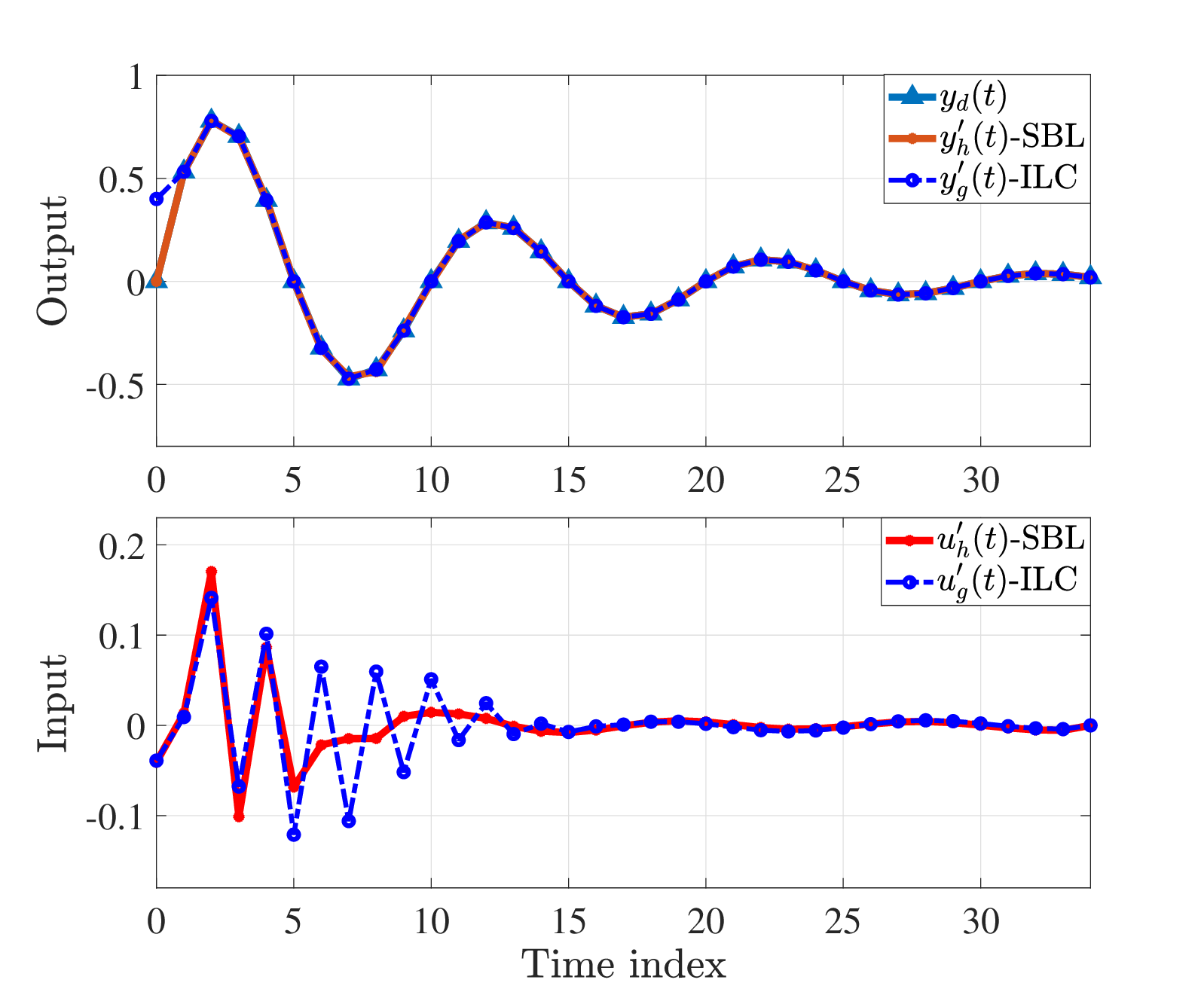}}
		\caption{Outputs of the system $\Sigma_{1,\mathbb{T}}$ and $\Sigma_{3,\mathbb{T}}$ for reference $y_d(t)$.}
		\label{fig-system13}
	\end{figure}
\end{example}

\vspace{10pt}
\begin{example}
	\rm Consider the a class of mobile robots equipped with two independent driving wheels (\citep{ZZhuang2023TSMC}), whose physical models are illustrated in Fig. \ref{fig-mobilerobots}.
	\begin{figure}[!ht]
		\centering
		{\includegraphics[width=0.7\columnwidth]{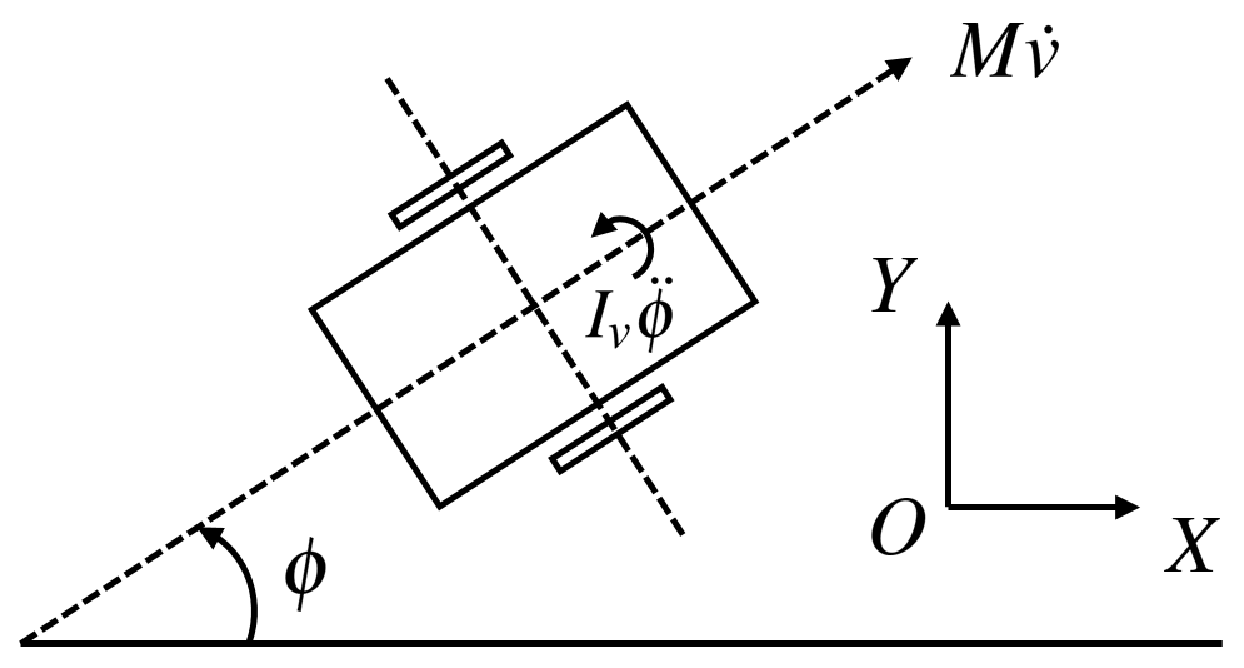}}
		\caption{A class of mobile robots with two driving wheels.}
		\label{fig-mobilerobots}
	\end{figure}
	The symbols $v$, $\phi$, $u_r$, and $u_l$ represents the velocity, azimuth, right driving input, and left driving input of the mobile robot, repectively. Let the state, input, and output of the robots be defined as $x=\begin{bmatrix}
		v,\ \phi,\ \dot{\phi}
	\end{bmatrix}^{\rm T}$,
	$u=\begin{bmatrix}
		u_r,\ u_l
	\end{bmatrix}^{\rm T}$,
	and $y=\begin{bmatrix}
		v,\ \phi
	\end{bmatrix}^{\rm T}$, respectively, and let the sampling time be $T_s=0.05s$. Through the discretization and linearization techniques, the dynamics of the mobile robots can be described by the state space representation as
	\begin{equation} \label{eq-robots}
		R_i:\left\{\begin{aligned}
			x_i(n+1)&=A_ix_i(n)+B_iu_i(n)\\
			y_i(n)&=C_ix_i(n)
		\end{aligned}\right. .
	\end{equation}
	The symbol $n\in\mathbb{Z}_+$ refers to the sampling points, thus the time interval between two adjacent sampling points is $T_s$.
	For the host robot $R_1$ and guest robot $R_2$ whose dynamics are represented by (\ref{eq-robots}), their model parameters are given as
	\begin{equation} \nonumber
		\begin{aligned}
			A_1&=\begin{bmatrix}
				1.0100 & 0 & 0\\
				0 & 1 & 0.0520\\
				0 & 0 & 1.0100
			\end{bmatrix},\
			A_2=\begin{bmatrix}
				0.9975 & 0 & 0\\
				0 & 1 & 0.0499\\
				0 & 0 & 0.9955
			\end{bmatrix},\\
			B_1&=\begin{bmatrix}
				0.0130 & 0.0130\\
				-0.0025 & -0.0050\\
				-0.0850 & -0.1700
			\end{bmatrix},\ B_2=\begin{bmatrix}
			0.0125 & 0.0125\\
			-0.0021 & -0.0042\\
			-0.0833 & -0.1666
			\end{bmatrix},\\ C_1&=C_2=\begin{bmatrix}
				1 & 0 \\
				0 & 1 \\
				0 & 0
			\end{bmatrix},\ x_1(0)=\begin{bmatrix}
				3 \\ 0 \\ 0
			\end{bmatrix},\ x_2(0)=\begin{bmatrix}
				3.02 \\ 0 \\ 1
			\end{bmatrix}.
		\end{aligned}
	\end{equation}
	We provide the model parameters solely to illustrate the simulation settings clearly. In the scenarios where the model information is not available, we can still obtain the data-based representation for the admissible behavior by designing appropriate offline test principles, as discussed in Lemma \ref{lemma-data-based representation}.
	
	Two mobile robots are assigned the same task, which is to move along a preplanned circular path within the time duration $\mathbb{T}=\left[0,4\right]$s. The circular path is specified by the velocity and azimuth references of the mobile robots. Specifically, within the time duration $\mathbb{T}=\left[0,4\right]s$, the reference trajectories for velocity and azimuth are defined as
	\begin{equation} \nonumber
		\begin{aligned}
			y_{d,v}(n)&=3\left(\text{m/s}\right), \ \forall n\in\mathbb{Z}_{79},\\
			y_{d,\phi}(n)&=\left\{
			\begin{aligned}
				&0 \left(\text{rad}\right),\ n\in\mathbb{Z}_{10}\\
				&-0.6875(n-11) \left(\text{rad}\right),\ n\in\mathbb{Z}_{79}\backslash\mathbb{Z}_{10}
			\end{aligned} \right. .
		\end{aligned}
	\end{equation}
	Therefore, the equivalent objective is to track the reference $y_d=\begin{bmatrix}
		y_{d,v}(n),\ y_{d,\phi}(n)
	\end{bmatrix}^{\rm T}$ over the specific time duration $\mathbb{T}$. For the guest mobile robot, ILC can efficiently address the tracking problems. After 200 iterations, the tracking performances brought by ILC are shown in Fig. \ref{fig-velocityandazimuth}, where the learned velocity and azimuth are denoted as $v_g(n)$-ILC and $\phi_g(n)$-ILC, respectively.
	\begin{figure}[!ht]
		\centering
		{\includegraphics[width=0.9\columnwidth]{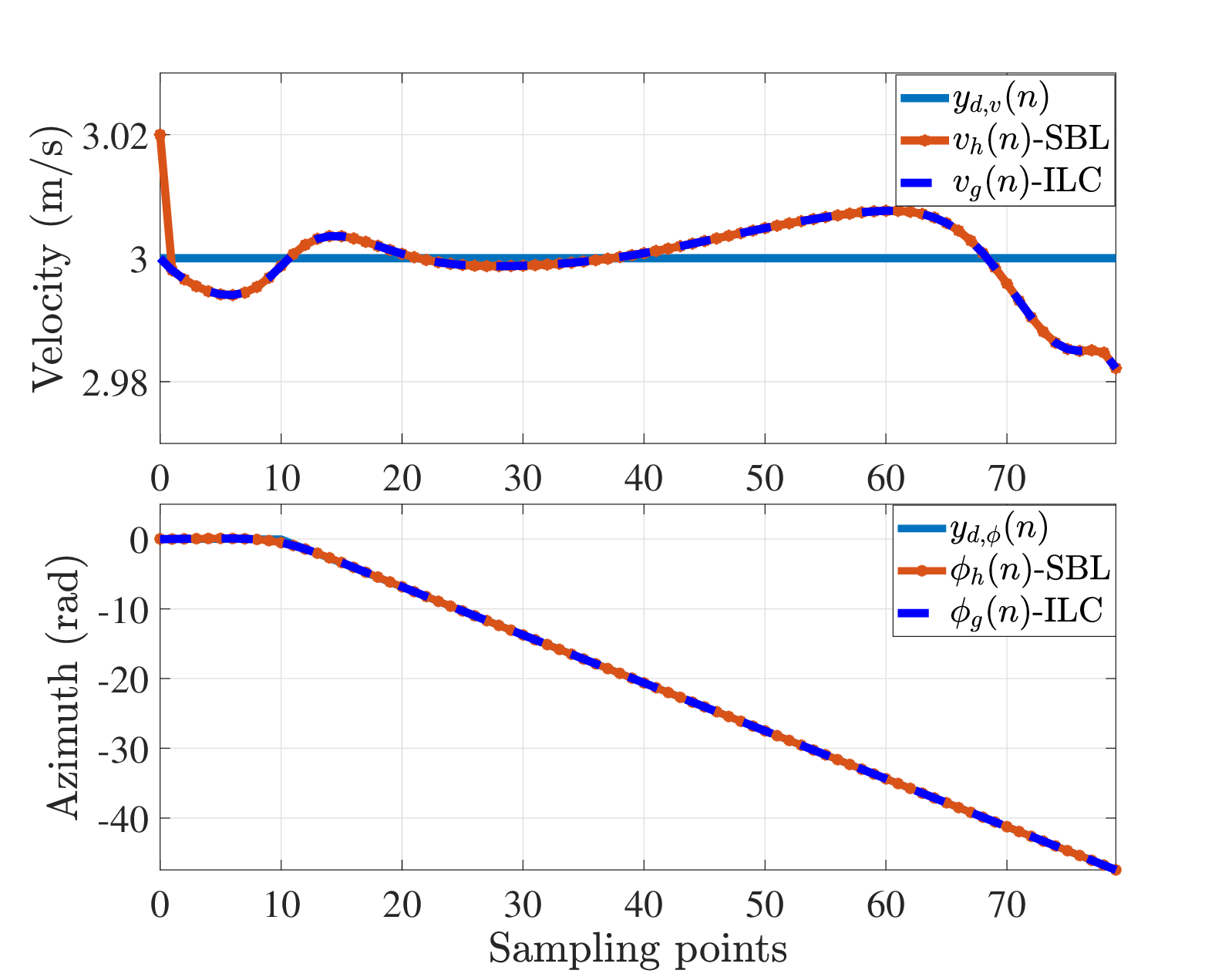}}
		\caption{Velocities and azimuths of mobile robots $R_1$ and $R_2$.}
		\label{fig-velocityandazimuth}
	\end{figure}
	As a result, the guest mobile robots gradually reaches the predefined circular trajectory. The learning process of the guest mobile robots is shown in Fig. \ref{fig-learningprocess}.
	\begin{figure}[!ht]
		\centering
		{\includegraphics[width=0.9\columnwidth]{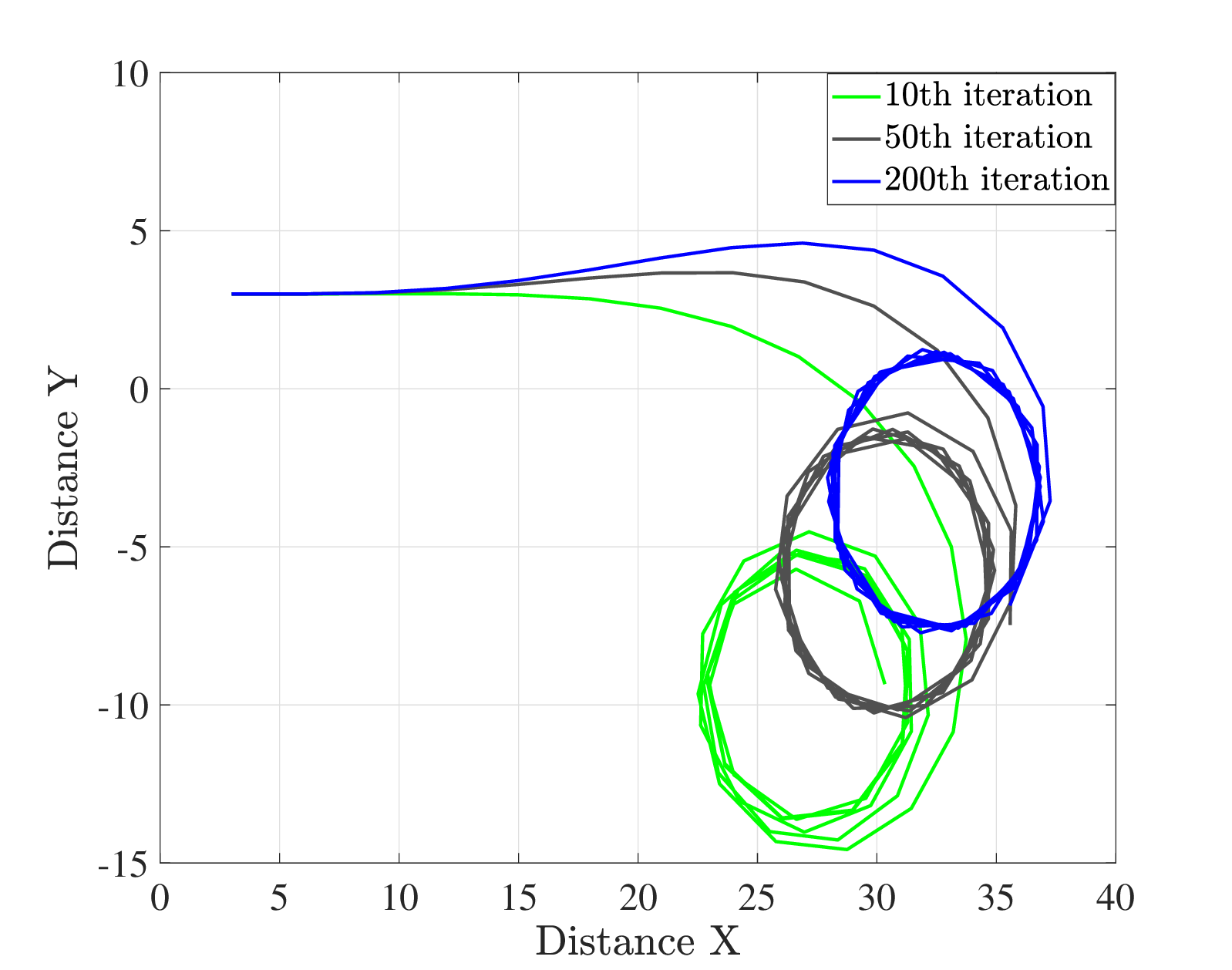}}
		\caption{Learning process of the guest mobile robots.}
		\label{fig-learningprocess}
	\end{figure}
	After 200 iterations of the algorithm, it can be observed that the mobile robot is able to proceed along the predefined circular trajectory.
	
	For the host mobile robot, it no longer relies on repetitive trial-and-error processes, but rather directly utilizes the successful experiences of the guest system to complete the control task. By leveraging the proposed similarity-based learning control framework, the learned velocity and azimuth of host mobile robot, denoted by $v_h(n)$-SBL and $\phi_h(n)$-SBL, respectively, are depicted in Fig. \ref{fig-velocityandazimuth}. Consequently, the learned path of the host mobile robot is shown in Fig. \ref{fig-twocircle}. From Fig. 9, it can be concluded that the host mobile robot can move along the preplanned circular path by leveraging the successful experience of the guest mobile robots, and the effectiveness of the proposed similarity-based learning control framework is verified.
	\begin{figure}[!ht]
		\centering
		{\includegraphics[width=0.9\columnwidth]{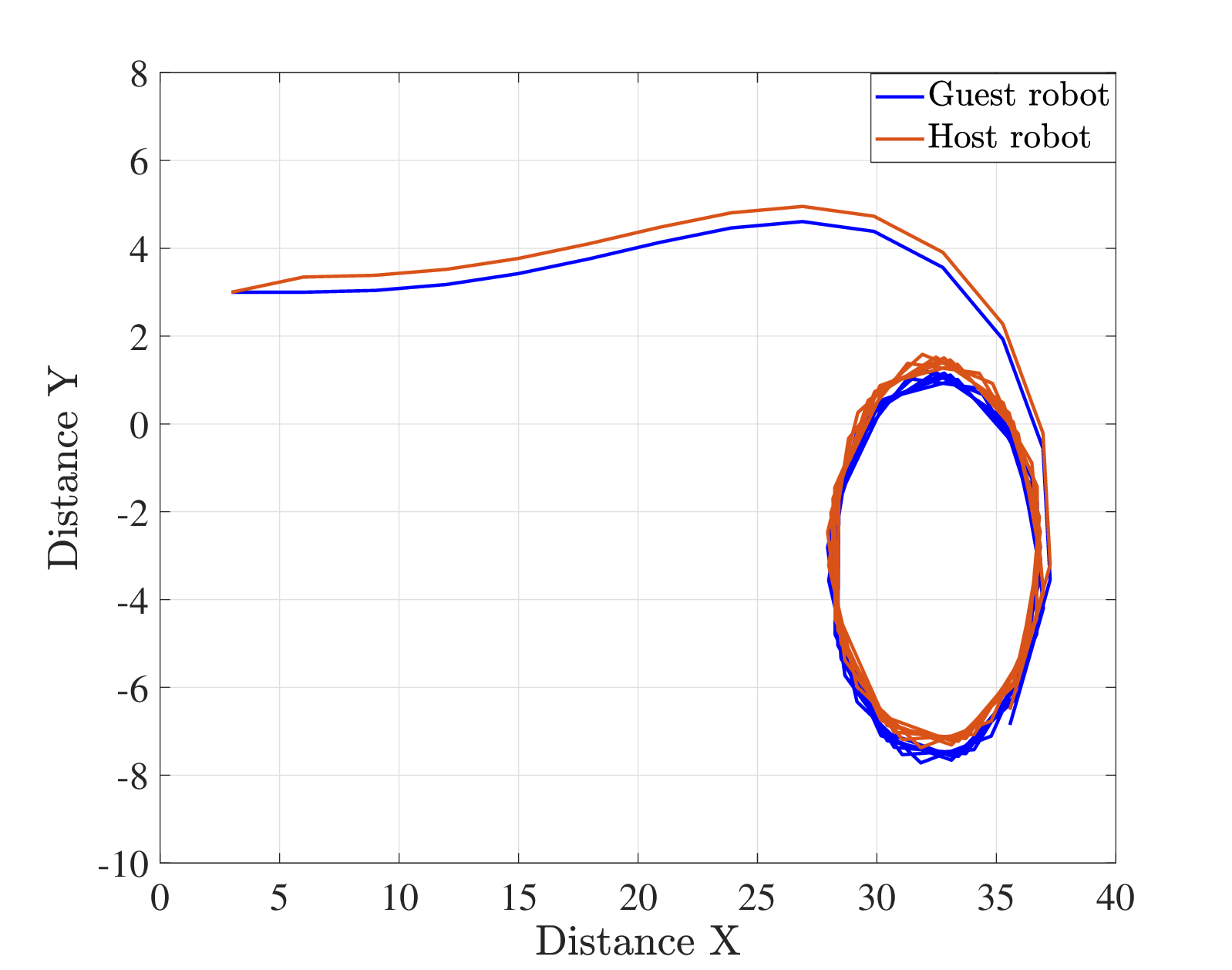}}
		\caption{Learned paths of mobile robots $R_1$ and $R_2$.}
		\label{fig-twocircle}
	\end{figure}
\end{example}
\section{Conclusions} \label{sec:Conclusions}
In this paper, we have innovatively proposed the definitions of similarity and similarity indexes between admissible behaviors, based on which a similarity-based learning control framework has been further developed. Owing to the absence of model knowledge, appropriate offline I/O test principles have been designed, based on which the admissible behaviors of LTV systems have been reconstructed from sampled data. By exploiting the sampled data, a data-based criterion for verifying the similarity and a data-based strategy for calculating the similarity indexes have been developed. Building upon the calculated similarity indexes and projection techniques, a similarity-based learning control framework has been developed by exploiting the sampled data. Consequently, the host system has accomplished the same tasks by leveraging the successful experience of the guest system, without repeatedly resorting to any existing learning-based control strategies.

\bibliographystyle{plainnat}        
\bibliography{myref}

\appendix
\section*{Appendix A: Proof to Theorem \ref{theorem-calculationofsimilarityindexes}}
Since the LAE in (\ref{eq-criterion for similarity}) is solvable, the admissible behaviors $\mathcal{B}_{1,x_{10}}$ and $\mathcal{B}_{2,x_{20}}$ are similar according to Lemma \ref{lemma-similaritycriterion}. Based on this fact, the similarity indexes $\textbf{SI}\left(\mathcal{B}_{1,x_{10}},\mathcal{B}_{2,x_{20}}\right)$ can be further calculated. Following the offline test principles (\ref{eq-principle1}) and (\ref{eq-principle2}) and the matrix $H_i$ in (\ref{eq-unitbasesofW}), the admissible behavior $\mathcal{B}_{i,x_{i0}},\ i\in\{1,2\}$ can be decomposed as
\begin{equation}\nonumber
	\mathcal{B}_{i,x_{i0}}=\mathcal{W}_i+w_{i}^0, \ i\in\{1,2\}
\end{equation}
where
\begin{equation}\nonumber
	\mathcal{W}_i=\text{span}\left(H_i\right).
\end{equation}
According to the definitions of singular values and singular vectors, the $k$-th biggest singular value of the matrix $H_1^{\rm T}H_2$ can be expressed as
\begin{equation} \label{eq-defi2singularvalues}
	s_k=\max_{\left\|l\right\|=\left\|v\right\|=1} l^{\rm T}H_1^{\rm T}H_2v=l_k^{\rm T}H_1^{\rm T}H_2v_k,\ k\in\mathbb{Z}_{n_uT}\backslash\{0\}
\end{equation}
subject to
\begin{equation} \nonumber
	\langle l,l_i\rangle=\langle v,v_i\rangle=0,\ i\in\mathbb{Z}_{k-1}\backslash\{0\}
\end{equation}
where $l_i\in\mathbb{R}^{n_uT}$ and $v_i\in\mathbb{R}^{n_uT}$. From the data-based construction of $H_i$ in (\ref{eq-unitbasesofW}), the matrices $H_1$ and $H_2$ are both orthogonal matrices. By introducing the following coordinate transformation
\begin{equation} \label{eq-pf2theorem1_eq1}
	\begin{aligned}
		p_i&=H_1l_i\in\mathcal{W}_1,\ p=H_1l\in\mathcal{W}_1\\
		q_i&=H_2v_i\in\mathcal{W}_2,\ q=H_2v\in\mathcal{W}_2
	\end{aligned},\ i\in \mathbb{Z}_{k-1}\backslash\{0\}
\end{equation}
it is directly concluded that
\begin{equation} \nonumber
	\begin{aligned}
		\left\|p\right\|&=\left\|H_1l\right\|=\left\|l\right\|=1,\\
		\left\|q\right\|&=\left\|H_2v\right\|=\left\|v\right\|=1
	\end{aligned}
\end{equation}
and
\begin{equation}\nonumber
	\begin{aligned}
		\langle x,x_i\rangle&=\langle H_1l,H_1l_i\rangle=\langle l,l_i\rangle=0\\
		\langle y,y_i\rangle&=\langle H_2v,H_2v_i\rangle=\langle v,v_i\rangle=0
	\end{aligned},\ i\in\mathbb{Z}_{k-1}\backslash\{0\}.
\end{equation}
By substituting (\ref{eq-pf2theorem1_eq1}) into (\ref{eq-defi2singularvalues}), the $k$-th biggest singular value $s_k$ defined in (\ref{eq-defi2singularvalues}) can be alternatively represented by
\begin{equation} \nonumber
	s_k=\max_{p\in\mathcal{W}_1}\max_{q\in\mathcal{W}_2}\langle p,q\rangle=\langle p_k,q_k\rangle,\ k\in\mathbb{Z}_{n_uT}\backslash\{0\}
\end{equation}
subject to
\begin{equation} \nonumber
	\left\|p\right\|=\left\|q\right\|=1,\ \langle p,p_i\rangle=0,\ \langle q,q_i\rangle=0,\ i\in\mathbb{Z}_{k-1}\backslash\{0\}.
\end{equation}
From Definition \ref{defi-principalangles}, it can be observed that the $k$-th biggest singular value of the matrix $H_1^{\rm T}H_2$ is exactly the cosine of $k$-th smallest principal angle between the subspaces $\mathcal{W}_1$ and $\mathcal{W}_2$, that is,
\begin{equation} \nonumber
	s_k=\cos\left(\theta_k\right),\ k\in\mathbb{Z}_{n_uT}\backslash\{0\}.
\end{equation}
Therefore, the similarity indexes between admissible behaviors $\mathcal{B}_{1,x_{10}}$ and $\mathcal{B}_{2,x_{20}}$ (or equivalently, the principal angles between the subspace components $\mathcal{W}_1$ and $\mathcal{W}_2$) can be efficiently obtained through computing the singular values of the matrices $H_1^{\rm T}H_2$, that is,
\begin{equation} \nonumber
	\textbf{SI}\left(\mathcal{B}_{1,x_{10}},\mathcal{B}_{2,x_{20}}\right)=\begin{bmatrix}
		s_1,\ s_2,\ \cdots,\ s_{n_uT}
	\end{bmatrix}.
\end{equation}
Additionally, of note is that the vectors $l_i$ and $v_i$ in (\ref{eq-defi2singularvalues}) are essentially the $i$-th column of the orthogonal matrices $U$ and $V$, that is,
\begin{equation} \nonumber
	\begin{aligned}
		U&=\begin{bmatrix}
			l_1,\ l_2,\ \cdots,\ l_{n_uT}
		\end{bmatrix},\\ V&=\begin{bmatrix}
			v_1,\ v_2,\ \cdots,\ v_{n_uT}
		\end{bmatrix}.
	\end{aligned}
\end{equation}
From Definition \ref{defi-principalangles}, the principal vectors associated with the subspaces $\mathcal{W}_1$ and $\mathcal{W}_2$ can be obtained from
\begin{equation} \nonumber
	\begin{aligned}
		\begin{bmatrix}
			p_1,\ p_2,\ \cdots,\ p_{n_uT}
		\end{bmatrix}&=H_1\begin{bmatrix}
			l_1,\ l_2,\ \cdots,\  l_{n_uT}
		\end{bmatrix}=H_1U,\\
		\begin{bmatrix}
			q_1,\ q_2,\ \cdots,\ q_{n_uT}
		\end{bmatrix}&=H_2\begin{bmatrix}
			v_1,\ v_2,\ \cdots,\ v_{n_uT}
		\end{bmatrix}=H_2V.
	\end{aligned}
\end{equation}
Consequently, the similarity indexes between two admissible behaviors and the principal vectors between their associated subspace components can be efficiently calculated through the SVD of $H_1^{\rm T}H_2$. The proof is completed.

\section*{Appendix B: Proof to Theorem \ref{theorem-similaritybasedlearning}}
Following the conditions C1) and C2), by leveraging the sampled data $\left(U_i^{Test},Y_i^{Test}\right)$, the admissible behaviors $\mathcal{B}_{1,x_{10}}$ and $\mathcal{B}_{2,x_{20}}$ can be decomposed as
\begin{equation} \nonumber
	\mathcal{B}_{i,x_{i0}}=\text{span}\left(H_i\right)+w_i^0,\ \forall i\in\{1,2\}
\end{equation}
where $H_i\in\mathbb{R}^{n_wT\times n_uT}$ satisfies $H^{\rm T}_iH_i=I_{n_uT}$.
As previously emphasized, the to-be-sought $w_h$ in Problem 3) is essentially the orthogonal projection of $w_g$ onto $\mathcal{B}_{1,x_{10}}$, i.e., $P_{\mathcal{B}_{1,x_{10}}}\left(w_g\right)$. From the conditions C3), the admissible behaviors $\mathcal{B}_{1,x_{10}}$ and $\mathcal{B}_{2,x_{20}}$ are similar, which allows for further calculating the similarity indexes. Additionally, by leveraging Theorem \ref{theorem-calculationofsimilarityindexes}, the condition C4) ensures that the similarity indexes between $\mathcal{B}_{1,x_{10}}$ and $\mathcal{B}_{2,x_{20}}$ can be obtained via calculating the singular values of $H^{\rm T}_1H_2$ as
\begin{equation}\nonumber
	\textbf{SI}\left(\mathcal{B}_{1,x_{10}},\mathcal{B}_{2,x_{20}}\right)=\begin{bmatrix}
		s_1,\ s_2,\ \cdots,\ s_{n_uT}
	\end{bmatrix}.
\end{equation}
Another helpful conclusion brought by the conditions C4) is that the principal vectors associated with the subspaces $\mathcal{W}_1$ and $\mathcal{W}_2$ can be calculated as $H_1U$ and $H_2V$. Consequently, the admissible behaviors can be equivalently expressed as
\begin{equation}\nonumber
	\begin{aligned}
		\mathcal{B}_{1,x_{10}}&=\text{span}\left(H_1U\right)+w_1^0,\\
		\mathcal{B}_{2,x_{20}}&=\text{span}\left(H_2V\right)+w_2^0.
	\end{aligned}
\end{equation}
From Lemma \ref{lemma-data-based representation},
for any admissible trajectory $w_g\in\mathcal{B}_{2,x_{20}}$, there always exists some vector $\overline{g}\in\mathbb{R}^{n_uT}$ such that (\ref{eq-representation of w_g}) holds. Building upon this fact, the existence of $\overline{g}$ can be guaranteed, and the to-be-sought $w_h\in\mathcal{B}_{1,x_{10}}$ can be rewritten as
\begin{equation} \nonumber
	\begin{aligned}
		w_h&=P_{\mathcal{B}_{1,x_{10}}}\left(w_g\right) \\
		&=P_{\mathcal{B}_{1,x_{10}}}\left(H_2V\overline{g}+w_2^0\right).
	\end{aligned}
\end{equation}
Nevertheless, the operator $P_{\mathcal{B}_{1,x_{10}}}\left(\cdot\right)$ is not linear, which results in difficulties in calculating this orthogonal projection. From Lemma \ref{lemma-projection}, the orthogonal projection onto $\mathcal{B}_{1,x_{10}}$ can be equivalently expressed as
\begin{equation} \nonumber
	\begin{aligned}
		P_{\mathcal{B}_{1,x_{10}}}\left(H_2V\overline{g}+w_2^0\right)&=w_1^0+P_{\text{span}\left(H_1U\right)}\left(H_2V\overline{g}+w_2^0-w_1^0\right).
	\end{aligned}
\end{equation}
Thanks to the linearity of the operator $P_{\text{span}\left(H_1U\right)}\left(\cdot\right)$, $w_h$ can be further rewritten as
\begin{equation} \nonumber
	w_h=w_1^0+P_{\text{span}\left(H_1U\right)}\left(H_2V\right)\overline{g}+P_{\text{span}\left(H_1U\right)}\left(w_2^0-w_1^0\right).
\end{equation}
Afterward, we focus on calculating $P_{\text{span}\left(H_1U\right)}\left(H_2V\right)\overline{g}$.
By leveraging the principal vectors $H_1U$ and $H_2V$ and similarity indexes $\textbf{SI}\left(\mathcal{B}_{1,x_{10}},\mathcal{B}_{2,x_{20}}\right)$, the orthogonal projection $P_{\text{span}\left(H_1U\right)}\left(H_2V\right)\overline{g}$ can actually be efficiently computed. This is also the prominent advantage of the similarity-based learning compared to existing learning-based methods. Let the $i$-th column of $H_1U$ (or $H_2V$) be denoted as $\left(H_1U\right)_i$ (or $\left(H_2V\right)_i$), then $P_{\text{span}\left(H_1U\right)}\left(H_2V\right)$ can be expressed as
\begin{equation} \nonumber
	\begin{aligned}
		&P_{\text{span}\left(H_1U\right)}\left(H_2V\right)\\
		=&\begin{bmatrix}
			P_{\text{span}\left(H_1U\right)}\left(H_2V\right)_1,\ \cdots,\
			P_{\text{span}\left(H_1U\right)}\left(H_2V\right)_{n_uT}
		\end{bmatrix}
	\end{aligned}
\end{equation}
where
\begin{equation} \nonumber
	P_{\text{span}\left(H_1U\right)}\left(H_2V\right)_i=\sum_{j=1}^{n_uT}\langle\left(H_2V\right)_i,\left(H_1U\right)_j\rangle\left(H_1U\right)_i
\end{equation}
holds for all $i\in\mathbb{Z}_{n_uT}\backslash\{0\}$. From the Definitions \ref{defi-principalangles} and \ref{defi-similarityindexes} and the properties of SVD, it follows that
\begin{equation} \label{eq-principal vectors}
	\begin{aligned}
		s_k&=\langle \left(H_2V\right)_k,\left(H_1U\right)_k\rangle, \ \forall k\in\mathbb{Z}_{n_uT}\backslash\{0\},\\
		0&=\langle \left(H_2V\right)_i,\left(H_1U\right)_j\rangle,\ \forall i\neq j,\ \forall i,j\in\mathbb{Z}_{n_uT}\backslash\{0\}.
	\end{aligned}
\end{equation}
By leveraging (\ref{eq-principal vectors}), the orthogonal projection $P_{\text{span}\left(H_1U\right)}\left(H_2V\right)$ can be further expressed as
\begin{equation} \nonumber
	\begin{aligned}
		&P_{\text{span}\left(H_1U\right)}\left(H_2V\right)\\
		=&\begin{bmatrix}
			P_{\text{span}\left(H_1U\right)}\left(H_2V\right)_1,\ \cdots,\
			P_{\text{span}\left(H_1U\right)}\left(H_2V\right)_{n_uT}
		\end{bmatrix}\\
		=&\begin{bmatrix}
			\left(H_1U\right)_1s_1,\ \cdots,\ \left(H_1U\right)_{n_uT}s_{n_uT}
		\end{bmatrix}\\
		=&H_1UD.
	\end{aligned}
\end{equation}
Therefore, the to-be-sought admissible trajectory $w_h\in\mathcal{B}_{1,x_{10}}$ can be further expressed as
\begin{equation} \nonumber
	w_h=w_1^0+H_1UD\overline{g}+P_{\text{span}\left(H_1\right)}\left(w_2^0-w_1^0\right).
\end{equation}
Since $w_h$ is essentially the orthogonal projection of $w_g$ onto $\mathcal{B}_{1,x_{10}}$, the difference $\left\|w_h-w_g\right\|$ must be minimal. The proof is completed.
\end{document}